\newtheorem{theorem}{Theorem}
\newtheorem{proposition}{Proposition}
\newtheorem{lemma}{Lemma}
\newtheorem{assumption}{Assumption}
\theoremstyle{definition}
\newtheorem{defn}{Definition}
\newtheorem{example}{Example}
\newcommand{\cond}{\mid}
\newcommand{\indep}{\perp \!\!\! \perp}
\theoremstyle{remark}
\newtheorem*{remark}{Remark}
\title{A general condition for bias attenuation by a nondifferentially mismeasured confounder}
\author[,1]{Jeffrey Zhang\thanks{Corresponding author. Email: \texttt{jzhang17@wharton.upenn.edu}. }}
\author[1]{Junu Lee} 
\affil[1]{{\normalsize Department of Statistics and Data Science, University of Pennsylvania, USA}} 
\date{}
\begin{document}

\maketitle
\begin{abstract}

    In real-world studies, the collected confounders may suffer from measurement error. Although mismeasurement of confounders is typically unintentional---originating from sources such as human oversight or imprecise machinery---deliberate mismeasurement also occurs and is becoming increasingly more common.  For example, in the 2020 U.S. Census, noise was added to measurements to assuage privacy concerns. Sensitive variables such as income or age are oftentimes partially censored and are only known up to a range of values. In such settings, obtaining valid estimates of the causal effect of a binary treatment can be impossible, as mismeasurement of confounders constitutes a violation of the no unmeasured confounding assumption. A natural question is whether the common practice of simply adjusting for the mismeasured confounder is justifiable. In this article, we answer this question in the affirmative and demonstrate that in many realistic scenarios not covered by previous literature, adjusting for the mismeasured confounders reduces bias compared to not adjusting. 
\end{abstract}
\section{Introduction}
In observational studies, researchers often only have access to mismeasured versions of the true confounders, rendering unbiased estimation of causal effects impossible without additional assumptions. Nevertheless, the observed mismeasured version of the true confounder can be utilized in various ways. One typical assumption on the mismeasured confounder is that it is independent of all other variables conditional on the true confounder. The conventional wisdom has been that adjusting for such an observed but mismeasured confounder leads to less bias than no adjustment whatsover \citep{GREENLAND1980}, and there are now many analytical results that rigorously verify the conventional wisdom under certain assumptions. This problem is well-studied in the statistical literature, and the observed variable is often referred to as a nondifferentially mismeasured confounder. \cite{Ogburn2013} study the case of ordinal unmeasured confounders, \cite{Gabriel2022} studies dichotomization of continuous unmeasured confounders, \cite{Ogburn2012}, \cite{Pena2020}, and \cite{Sjolander2022} study binary unmeasured confounders, and \cite{Pena2021} studies general discrete unmeasured confounders. Each of the above papers provides sufficient conditions or examples/counterexamples of when the estimate of a causal effect adjusting for the observed mismeasured confounder comes closer to the true causal effect than when using the crude, unadjusted estimate. In this paper, we present an interpretable sufficient condition for ordinal and continuous confounders to reduce bias, with no restriction on the support of the unmeasured and mismeasured confounders.  In a different vein, \cite{Ding2017} demonstrate that adjusting for an instrumental variable can amplify bias, under some general conditions. We draw on this work to develop our results tailored to our setting.

Also related to our work is the vast literature on auxiliary variables with special causal structure. Such variables should be correlated with unmeasured confounders, and examples include negative control outcomes, negative control exposures, and second treatments. 
These variables possess qualitative similarities with the mismeasured confounders that we study and have been used for many purposes. \cite{rosenbaum1989} and \cite{Lipsitch2010} use negative controls to detect bias, \cite{Rosenbaum2006} and \cite{Chen2023a} use second treatments to learn about differential effects, and
the proximal causal inference literature utilizes negative control outcomes, exposures and/or mismeasured versions of confounders
to obtain consistent effect estimates or bounds \citep{TchetgenTchetgen2014, Kuroki2014, Miao2018IdentifyingConfounder, Park2024SingleControl, Ghassami2023}. The proxy literature, however, requires assumptions on the variability of the true confounder compared to the proxy. In the discrete case, if the unmeasured confounder has more levels than the proxy, the proxy methodology cannot be applied. We tackle the less ambitious goal of decreasing bias by imposing interpretable monotonicity assumptions but do not restrict the variability of the unmeasured confounder relative to its measured counterpart.
\begin{figure}
    \centering
    \begin{tikzpicture}[node distance=1cm, scale = 0.8]
    \tikzstyle{var} = [draw, circle, minimum width=0.8cm, minimum height=0.8cm]
    
    
        \node[var] (A) at (0,0) {$A$};  
        \node[var] (Y) at (3,0) {$Y$};  
        \node[var] (U) at (1.5,2) {$U$}; 
        \node[var] (C) at (1.5,0.7) {$C$};  
    
    \draw[->] (A) -- (Y);
    \draw[->] (U) -- (C);
    \draw[->] (U) -- (Y);
    \draw[->] (U) -- (A);
    
\end{tikzpicture}
    \caption{The relationships between treatment $A$, unmeasured confounder $U$ and its mismeasured version $C$, and outcome $Y$.}
    \label{fig:dag}
\end{figure}
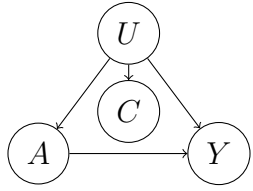
\section{Background and Notation}
In our setting, $U$ is an unmeasured, univariate variable that confounds the relationship between a binary treatment $A$ and outcome $Y$. However, we measure an imperfect proxy for $U$, call it $C$, such that $C$ is only affected by $U$. Thus, $A \indep C \mid U$ and $Y \indep C \mid A, U$. For a directed acyclic graph depiction, see Figure \ref{fig:dag}. We will refer to $C$ as a proxy or mismeasured confounder throughout the manuscript. For ease of exposition, we omit measured confounders $X$, keeping in mind that they can be straightforwardly accommodated with a slight adjustment of the assumptions. All of our assumptions can be interpreted as conditional on $X = x$, and would need to hold for all $x$. We assume the existence of potential outcomes $Y(1)$ and $Y(0)$ that correspond to the outcome that would have been observed if, possibly contrary to fact, a subject had received treatment $A = 1$ and $A = 0$, respectively. The vector $(U, C, A, Y(1), Y(0))$ is assumed to be sampled iid from an unknown distribution $F$. $f(a)$ will denote the density or probability mass function of a random variable $A$ evaluated at $a$, and $f(a \mid b)$ the density conditional on $B = b$. $\mathcal{A}$ will denote the support of a random variable $A$. We will study whether the unadjusted estimate is more or less biased than the adjusted for $C$ estimate for the true effect. 

\section{Qualitative similarity between the mismeasured and unmeasured confounder}
\subsection{Assumptions}
We first introduce the causal assumptions needed so that had $U$ been measured, the causal quantities of interest would be identified.
\begin{assumption}[Causal assumptions]
\label{assumption: causal}
 (i) $Y = Y(1)A + Y(0)(1 - A)$, (ii) For some $\epsilon > 0$, $\epsilon \leq P(A = 1 \mid U = u) \leq 1 - \epsilon$ for all $u$, (iii) $\{Y(1), Y(0)\} \indep A \mid U$.
\end{assumption}
Under these causal assumptions, it is well known that $E(Y(a))$ is identified by $\int E(Y \mid A = a, u) dF(u) = E[AY / P(A = 1 \mid U)]$. We will require some assumptions on the outcome regressions $E(Y \mid A = a, u)$ and propensity score $P(A = 1 \mid u)$, which may need to be verified for a posited $U$ through subject matter expertise. Similar assumptions have also been invoked in \cite{VanderWeele2008}, \cite{Chiba2009},\cite{Ogburn2012},\cite{Ogburn2013}, and \cite{Ding2017}.
\begin{assumption}[Monotonicity] 
\label{assumption: mono}
We require the following conditions:
\begin{enumerate}[label=\roman*)]
\item (outcome regression monotonicity) $E(Y \mid A = a, U = u)$ is non-decreasing in $u$ for $a = 0, 1$.
\item (propensity score monotonicity) $P(A = 1 \mid U = u)$ is non-decreasing in $u$.
\end{enumerate}
\end{assumption}

\begin{remark}
Though we have presented the assumptions in terms of non-decreasing monotonicity, the main attenuation result will continue to hold if one or both montonicity assumptions are flipped to non-increasing. 
\end{remark}

The next assumption formally encapsulates the assumptions implied by the directed acyclic graph in Figure \ref{fig:dag}, and is typically referred to as nondifferential mismeasurement \citep{Ogburn2012}. 
\begin{assumption}[Nondifferential mismeasurement]
\label{assumption:nondifferential}
$C \indep (A,Y) \mid U$.
\end{assumption}
The next two assumptions are the most crucial. They impose specific forms of positive dependence between the unmeasured confounder and the measured $C$. Before laying out the assumptions, we introduce two definitions.

\begin{defn}[\cite{Lehmann1966}]
We say that a random variable $A$ is positively (resp. negatively) regression dependent on $B$ if $P(A \geq a \mid B = b)$ is non-decreasing (resp. non-increasing) in $b$ for all $a \in \mathcal{A}$. If $P(A \geq a \mid B = b, C = c)$ is non-decreasing (resp. non-increasing) in $b$ for all $a \in \mathcal{A}$, we say $A$ is positively (resp. negatively) regression dependent on $B$ given $C = c$.
\end{defn}
\begin{defn}[\cite{Karlin1956, Lehmann1966}]
 Let $a, a' \in \mathcal{A}$ and $b, b' \in \mathcal{B}$. We say that a random variable $A$ has positive (resp. negative) likelihood ratio dependence on $B$ if $f(a' \mid b')f(a \mid b) \geq (\text{resp.} \leq) f(a \mid b')f(a' \mid b)$ when $a' > a, b' > b$. If $f(a' \mid b', c)f(a \mid b, c) \geq (\text{resp.} \leq) f(a \mid b', c)f(a' \mid b, c)$ when $a' > a, b' > b$, we say that $A$ has positive (resp. negative) likelihood ratio dependence on $B$ given $C = c$.
\end{defn}
The first of the two assumptions is based on regression dependence. 
\begin{assumption}[Regression Dependence]
\label{assumption:prds}
(i) $U$ is positive (negative) regression dependent on $C$, (ii) $U$ is positive (negative) regression dependent on $C$ given $A = a$, for $a = 0,1$. 
\end{assumption}
Part (i) is relatively easy to interpret, though it places restrictions on $U \mid C$ rather than $C \mid U$. Part (ii) requires that further conditioning on the treatment status does not break the regression dependence, which is harder to interpret. Under the special case where $U$ is binary, part (ii) of Assumption \ref{assumption:prds} is implied by part (i) when the mismeasurement is nondifferential. This is stated formally below:
\begin{lemma}
\label{lemma: u binary prds}
If $U$ is binary, $C \indep A \mid U$ as in Assumption \ref{assumption:nondifferential}, and Assumption \ref{assumption:prds}(i) holds, then Assumption \ref{assumption:prds}(ii) holds.
\end{lemma}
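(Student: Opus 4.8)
The plan is to exploit the fact that for a binary $U$, taking values in $\{0,1\}$ without loss of generality, positive regression dependence of $U$ on $C$ collapses to a single monotonicity statement. Since $P(U \geq u \mid C = c)$ equals $1$ for $u = 0$ and equals $P(U = 1 \mid C = c)$ for the only nontrivial level $u = 1$, Assumption \ref{assumption:prds}(i) is exactly the requirement that $c \mapsto P(U = 1 \mid C = c)$ be non-decreasing. Likewise, Assumption \ref{assumption:prds}(ii) reduces to showing that $c \mapsto P(U = 1 \mid C = c, A = a)$ is non-decreasing for each fixed $a \in \{0,1\}$. So the entire lemma amounts to transferring monotonicity from the unconditional map to the treatment-conditional one.

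First I would write $P(U = 1 \mid C = c, A = a)$ via Bayes' rule within the event $C = c$:
\[
P(U = 1 \mid C = c, A = a) = \frac{P(A = a \mid U = 1, C = c)\, P(U = 1 \mid C = c)}{\sum_{u \in \{0,1\}} P(A = a \mid U = u, C = c)\, P(U = u \mid C = c)}.
\]
This is where nondifferential mismeasurement ($C \indep A \mid U$) enters: it lets me replace $P(A = a \mid U = u, C = c)$ by $P(A = a \mid U = u)$, which does not depend on $c$. Writing $\alpha := P(A = a \mid U = 1)$ and $\beta := P(A = a \mid U = 0)$ and abbreviating $p = p(c) := P(U = 1 \mid C = c)$, the display collapses to the fractional-linear form
\[
P(U = 1 \mid C = c, A = a) = \frac{\alpha\, p}{\alpha\, p + \beta\,(1 - p)} =: g(p).
\]

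It then remains to show $g$ is non-decreasing on $[0,1]$, after which the conclusion follows by composing $g$ with the non-decreasing map $c \mapsto p(c)$ supplied by Assumption \ref{assumption:prds}(i). A direct differentiation gives $g'(p) = \alpha\beta / \bigl(\beta + (\alpha - \beta)p\bigr)^2 \geq 0$, so $g$ is non-decreasing. One should note that the denominator $\alpha p + \beta(1-p) = P(A = a \mid C = c)$ is strictly positive, which is guaranteed by the positivity in Assumption \ref{assumption: causal}(ii) forcing $\alpha, \beta \in [\epsilon, 1-\epsilon]$. The same argument applies verbatim for both $a = 0$ and $a = 1$.

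I do not anticipate a serious technical obstacle; the computation is short once the correct reduction is made. The only point requiring care is conceptual: recognizing that the Bayesian update $g$ from the $U \mid C$ law to the $U \mid C, A$ law is monotone precisely because $\alpha\beta \geq 0$, i.e.\ because the update weights are nonnegative, and that nondifferential mismeasurement is exactly what decouples these weights $\alpha, \beta$ from $c$. Without that decoupling, $\alpha$ and $\beta$ would themselves vary with $c$ and the transferred monotonicity could break down.
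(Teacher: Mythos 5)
Your proposal is correct and follows essentially the same route as the paper's proof: the same Bayes'-rule decomposition of $P(U = 1 \mid C = c, A = a)$, the same use of nondifferential mismeasurement to make the weights $P(A = a \mid U = u)$ free of $c$, and the same monotone fractional-linear update $p \mapsto \alpha p/(\alpha p + \beta(1-p))$. The only cosmetic difference is that you verify monotonicity of this map by differentiation while the paper uses a discrete increment argument (numerator increases by at least as much as the denominator for a fraction bounded by one), which are interchangeable.
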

The second of the two assumptions is based on likelihood ratio dependence. 
\begin{assumption}[Monotone likelihood ratio dependence]
\label{assumption: mlr}
$C$ has positive or negative likelihood ratio dependence on $U$.
\end{assumption}
Assumption \ref{assumption: mlr} turns out to be a stronger assumption than \ref{assumption:prds} when the mismeasurement is nondifferential, which we formalize in the next proposition. Although it is stronger than \ref{assumption:prds}, assumption \ref{assumption: mlr} is attractive in that it places a restriction on the distribution of $C \mid U$, which is more natural than reasoning about the three distributions $U \mid C$, $U \mid A = 1, C$, and $U \mid A = 0, C$. 
\begin{proposition}
\label{prop: mlr implies prd}
If $C \indep A \mid U$ as in Assumption \ref{assumption:nondifferential}, then Assumption \ref{assumption: mlr} implies Assumption \ref{assumption:prds}.
\end{proposition}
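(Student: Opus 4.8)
The plan is to exploit two facts: that likelihood ratio dependence is a \emph{symmetric} relation between the two variables, and that it is strictly stronger than regression dependence, the latter being a classical implication I would establish as a self-contained lemma. First I would observe that the defining inequality of Assumption \ref{assumption: mlr}, namely $f(c' \mid u')f(c \mid u) \geq f(c \mid u')f(c' \mid u)$ for $c' > c$ and $u' > u$, becomes after multiplying both sides by $f(u)f(u')$ the joint inequality $f(c',u')f(c,u) \geq f(c,u')f(c',u)$. This condition is manifestly symmetric in the roles of $C$ and $U$, so Assumption \ref{assumption: mlr} is equivalent to the statement that $U$ has positive (resp.\ negative) likelihood ratio dependence on $C$; equivalently, $f(u \mid c')/f(u \mid c)$ is non-decreasing (resp.\ non-increasing) in $u$ whenever $c' > c$.

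Next I would prove the standard fact (see \citet{Lehmann1966}) that likelihood ratio dependence implies regression dependence. Taking the positive case, suppose $g(u) := f(u \mid c')/f(u \mid c)$ is non-decreasing in $u$ for $c' > c$. Since both conditional densities integrate to one, $g$ must cross the value $1$ at some threshold $u_0$, with $g \leq 1$ below $u_0$ and $g \geq 1$ above. For any $t \geq u_0$ one has $P(U \geq t \mid C = c') = \int_t^\infty g(u) f(u \mid c)\,du \geq \int_t^\infty f(u \mid c)\,du = P(U \geq t \mid C = c)$, and for $t < u_0$ the complementary integral over $(-\infty, t)$ yields the same conclusion. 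Hence $P(U \geq t \mid C = c)$ is non-decreasing in $c$, which is exactly Assumption \ref{assumption:prds}(i); the negative case is identical with the inequalities reversed.

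The crux is part (ii), where the regression dependence must survive conditioning on $A$, and this is where nondifferential mismeasurement enters. By Assumption \ref{assumption:nondifferential}, $f(a \mid u, c) = f(a \mid u)$, so Bayes' rule gives $f(u \mid c, a) \propto f(a \mid u)\, f(c \mid u)\, f(u)$ as a function of $u$, with a normalizing constant depending only on $(c,a)$. Forming the likelihood ratio for $U$ given $C$ at fixed $A = a$, namely $\frac{f(u' \mid c', a)\, f(u \mid c, a)}{f(u \mid c', a)\, f(u' \mid c, a)}$ for $u' > u$ and $c' > c$, the factors $f(a \mid u)$, $f(a \mid u')$, $f(u)$, $f(u')$ together with the normalizing constants all cancel, leaving exactly $\frac{f(c' \mid u')\, f(c \mid u)}{f(c' \mid u)\, f(c \mid u')}$, which is $\geq 1$ by Assumption \ref{assumption: mlr}. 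Thus $U$ retains positive likelihood ratio dependence on $C$ given $A = a$, and applying the lemma of the previous paragraph conditionally on $A = a$ delivers Assumption \ref{assumption:prds}(ii).

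The main obstacle I anticipate is not any single computation but making the factorization in part (ii) airtight: the whole argument hinges on $f(a \mid u, c)$ collapsing to $f(a \mid u)$ so that the $A$-dependence enters only through a factor independent of $c$, which then drops out of the likelihood ratio along with the normalizing constant. I would take care to verify that this cancellation is valid in both the discrete and continuous cases, and that the threshold-crossing step in the lemma is justified whenever the relevant conditional densities or probability mass functions are strictly positive on the supports in question.
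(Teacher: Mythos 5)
Your proposal is correct and follows essentially the same route as the paper's proof: establish that likelihood ratio dependence is symmetric (the paper's Lemma A.1), invoke the classical fact that it implies regression dependence (the paper's Lemma A.2, which you prove rather than cite), and use $f(a \mid u, c) = f(a \mid u)$ to show the likelihood ratio for $U$ given $C$ is unchanged by conditioning on $A = a$. Your proportionality/cancellation phrasing of the last step is just a compressed version of the paper's chain of Bayes-rule manipulations.
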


\begin{remark}
Although we have presented the assumptions \ref{assumption:prds} and \ref{assumption: mlr} allowing either positive or negative dependence, in our proofs we focus on the positive case without loss of generality. In practice, we would expect $C$ to be positively dependent on $U$.
\end{remark}

We now verify that the relationship between the mismeasured $C$ and the treatment and outcome is qualitatively similar to that of $U$. This result was derived for binary confounders in \cite{Ogburn2012} and \cite{Sjolander2022}. 
\begin{lemma}
\label{lem: proxy_u_similarity}
Under the positive sign versions of assumptions \ref{assumption: mono} and \ref{assumption:prds}, and assumption \ref{assumption:nondifferential},
\begin{align*}
&P(A = 1 \mid C = c), \ E(Y \mid A = 0, C= c), \ E(Y \mid A = 1, C= c)\text{ are non-decreasing in } c.
\end{align*}
\end{lemma}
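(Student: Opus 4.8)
The plan is to reduce all three monotonicity claims to a single principle: the conditional expectation of a non-decreasing function of $U$ is non-decreasing in a conditioning variable whenever $U$ is stochastically increasing in that variable. Concretely, positive regression dependence of $U$ on $C$, namely $P(U \geq u \mid C = c)$ non-decreasing in $c$ for all $u$, is equivalent to the statement that $E[g(U) \mid C = c]$ is non-decreasing in $c$ for every non-decreasing $g$ for which the expectation exists. This follows from the integration-by-parts representation
\[
E[g(U) \mid C = c] = g(-\infty) + \int P(U > u \mid C = c)\, dg(u),
\]
where $dg \geq 0$ because $g$ is non-decreasing, and each $P(U > u \mid C = c)$ is non-decreasing in $c$ by Assumption \ref{assumption:prds}. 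The identical principle applies after additionally conditioning on $A = a$.

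For the propensity score I would first condition on $U$, writing
\[
P(A = 1 \mid C = c) = \int P(A = 1 \mid U = u, C = c)\, dF(u \mid C = c).
\]
Assumption \ref{assumption:nondifferential} gives $A \indep C \mid U$, so the integrand collapses to $P(A = 1 \mid U = u)$, which is non-decreasing in $u$ by Assumption \ref{assumption: mono}(ii). Hence $P(A = 1 \mid C = c) = E[\,P(A = 1 \mid U) \mid C = c\,]$ is the conditional expectation of a non-decreasing function of $U$, and Assumption \ref{assumption:prds}(i) supplies exactly the stochastic monotonicity needed to conclude it is non-decreasing in $c$.

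For the outcome regressions, the same decomposition yields, for each $a \in \{0,1\}$,
\[
E(Y \mid A = a, C = c) = \int E(Y \mid A = a, U = u, C = c)\, dF(u \mid A = a, C = c).
\]
Now Assumption \ref{assumption:nondifferential}, through $Y \indep C \mid A, U$, reduces the integrand to $E(Y \mid A = a, U = u)$, which is non-decreasing in $u$ by Assumption \ref{assumption: mono}(i), so that $E(Y \mid A = a, C = c) = E[\,E(Y \mid A = a, U) \mid A = a, C = c\,]$. The crucial point is that we are integrating against the distribution of $U$ conditional on \emph{both} $A = a$ and $C = c$; the unconditional dependence of part (i) is no longer sufficient, and it is precisely Assumption \ref{assumption:prds}(ii) that delivers stochastic monotonicity of $U \mid A = a, C = c$ in $c$ and hence the claim.

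The main obstacle is conceptual rather than computational: correctly matching each term to the appropriate form of regression dependence, and recognizing that the conditioning on $A$ in the outcome regressions forces the use of the conditional assumption \ref{assumption:prds}(ii) rather than \ref{assumption:prds}(i). The only technical care needed lies in the stochastic-monotonicity principle itself---assuming $Y$ is integrable so the conditional expectations are well defined, and justifying the integration-by-parts representation for a general non-decreasing $g$---but this is standard and does not interact with the causal structure.
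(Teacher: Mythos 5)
Your proof is correct and follows essentially the same route as the paper's: decompose via the law of total probability, use Assumption \ref{assumption:nondifferential} to collapse the integrand to $P(A=1\mid U=u)$ or $E(Y\mid A=a, U=u)$, and then convert regression dependence of $U$ on $C$ (resp.\ on $C$ given $A=a$, which you correctly identify as the reason Assumption \ref{assumption:prds}(ii) is needed for the outcome regressions) into monotonicity of the resulting conditional expectation. The one difference is minor but genuinely cleaner: by packaging the last step as the standard equivalence between stochastic monotonicity and monotonicity of $E[g(U)\mid C=c]$ for non-decreasing $g$, via the survival-function form $g(-\infty)+\int P(U>u\mid C=c)\,dg(u)$, you avoid the $c$-dependent boundary term $P(A=1\mid U=\overline{u}(c))$ that the paper's explicit integration by parts produces, and hence do not need its auxiliary Lemma \ref{lemma: essential supremum mono} on the essential supremum of $U\mid C=c$; the only residual care, which you already flag, is handling $g$ unbounded below in the outcome-regression case.
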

Equipped with this lemma, we can proceed to the main attenuation result.
\section{Main Result}
We first introduce a helpful lemma that has been invoked in the previous literature.
\begin{lemma}[\cite{Esary1967}, Theorem 3.1]
\label{lem: covariance_inequality}
Let $f(\cdot)$ and $g(\cdot)$ be functions, each with $K$ real-valued arguments, which are both non-decreasing in each of their arguments. If $U=\left(U_1, \ldots, U_K\right)$ is a multivariate random variable with $K$ mutually independent components, then $\operatorname{cov}\{f(U), g(U)\} \geq 0$.
\end{lemma}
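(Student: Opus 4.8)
The plan is to prove the inequality by induction on the number of independent coordinates $K$, using a coupling argument for the base case and the law of total covariance for the inductive step.

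For the base case $K = 1$, I would introduce an independent copy $U_1'$ of $U_1$. Since both $f$ and $g$ are non-decreasing functions of a single real argument, the differences $f(U_1) - f(U_1')$ and $g(U_1) - g(U_1')$ always share the same sign, so their product is pointwise non-negative:
\[
\bigl(f(U_1) - f(U_1')\bigr)\bigl(g(U_1) - g(U_1')\bigr) \geq 0.
\]
Taking expectations and expanding, the cross terms simplify using that $U_1$ and $U_1'$ are iid, yielding $2\operatorname{cov}\{f(U_1), g(U_1)\} \geq 0$, which is the claim.

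For the inductive step, suppose the result holds for any collection of $K - 1$ mutually independent coordinates, and condition on $U_K = u_K$. For each fixed $u_K$, the maps $(u_1, \ldots, u_{K-1}) \mapsto f(u_1, \ldots, u_{K-1}, u_K)$ and the analogous map for $g$ are non-decreasing in each of their $K-1$ arguments, and $U_1, \ldots, U_{K-1}$ remain mutually independent after conditioning. The inductive hypothesis therefore gives $\operatorname{cov}\{f(U), g(U) \mid U_K\} \geq 0$ almost surely. I would then invoke the law of total covariance,
\[
\operatorname{cov}\{f(U), g(U)\} = E\bigl[\operatorname{cov}\{f(U), g(U) \mid U_K\}\bigr] + \operatorname{cov}\bigl\{E[f(U) \mid U_K], E[g(U) \mid U_K]\bigr\},
\]
and handle the two summands separately: the first is non-negative because its integrand is non-negative by the previous display.

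The crux is the second summand. Writing $\tilde f(u_K) := E[f(U) \mid U_K = u_K]$ and $\tilde g(u_K) := E[g(U) \mid U_K = u_K]$, I note that because $f$ and $g$ are non-decreasing in their last argument and the expectation over the independent coordinates $U_1, \ldots, U_{K-1}$ preserves monotonicity, $\tilde f$ and $\tilde g$ are both non-decreasing functions of the single variable $U_K$. Applying the base case to $\tilde f(U_K)$ and $\tilde g(U_K)$ shows the second summand is non-negative as well, completing the induction. The main thing to watch is this monotonicity-preservation step---verifying that passing to conditional expectations simultaneously retains the independence structure needed to apply the inductive hypothesis and the coordinatewise monotonicity needed to restart the base case---rather than any delicate estimate; the argument is otherwise routine bookkeeping.
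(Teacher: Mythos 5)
Your proof is correct: the base case via an independent copy $U_1'$ and the identity $E\bigl[\bigl(f(U_1)-f(U_1')\bigr)\bigl(g(U_1)-g(U_1')\bigr)\bigr]=2\operatorname{cov}\{f(U_1),g(U_1)\}$, followed by induction through the law of total covariance, is the standard argument for this Chebyshev-type association inequality, and you correctly identify the two points that need checking (independence is preserved under conditioning on $U_K$, and coordinatewise monotonicity survives integrating out $U_1,\ldots,U_{K-1}$). Note, however, that the paper does not prove this lemma at all---it is imported verbatim as Theorem 3.1 of \cite{Esary1967}---so there is no in-paper proof to compare against; your write-up is essentially the classical proof found in that reference and in standard texts.
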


The next two lemmas determine the direction of the bias of the adjusted mean compared to the true counterfactual mean as well as the relationship between the unadjusted and adjusted means.
\begin{lemma}
\label{lemma: adjusted larger than true}
Under assumptions \ref{assumption: causal}, the positive sign versions of \ref{assumption: mono},  and \ref{assumption:nondifferential}, 
\begin{equation*}
    \int E(Y \mid A = 1, C = c) f(c) dc \geq E[Y(1)] \text{ and } \int E(Y \mid A = 0, C = c) f(c) dc \leq E[Y(0)]. 
\end{equation*}
\end{lemma}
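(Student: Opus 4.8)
The plan is to reduce both inequalities to a single conditional covariance that is signed by the two monotonicity assumptions, invoking Lemma~\ref{lem: covariance_inequality} with $K = 1$. Write $m_a(u) = E(Y \mid A = a, U = u)$ and $\pi(u) = P(A = 1 \mid U = u)$, and let $\theta_a = \int E(Y \mid A = a, C = c) f(c)\, dc$ denote the $C$-adjusted mean. By Assumption~\ref{assumption: causal} and the identification formula quoted in the text, $E[Y(a)] = \int m_a(u)\, f(u)\, du$. First I would use the outcome part of nondifferential mismeasurement, $Y \indep C \mid A, U$, to peel off $C$ inside the regression: since $E(Y \mid A = a, U = u, C = c) = m_a(u)$, we get $E(Y \mid A = a, C = c) = \int m_a(u)\, f(u \mid A = a, c)\, du$ and hence $\theta_a = \int m_a(u)\, h_a(u)\, du$ with $h_a(u) := \int f(u \mid A = a, c) f(c)\, dc$.

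Next I would rewrite the tilted density $f(u \mid A = 1, c)$ using Bayes' rule together with the treatment part of nondifferential mismeasurement, $A \indep C \mid U$, which gives $P(A = 1 \mid U = u, C = c) = \pi(u)$ and therefore $f(u \mid A = 1, c) = \pi(u) f(u \mid c)/P(A = 1 \mid c)$. Substituting into $\theta_1 - E[Y(1)] = \int m_1(u)[h_1(u) - f(u)]\, du$, expanding $f(u) = \int f(u \mid c) f(c)\, dc$, and exchanging the order of integration, the integrand over $u$ at a fixed $c$ becomes $\frac{1}{P(A = 1 \mid c)}\big(\int m_1(u)\pi(u) f(u \mid c)\, du - P(A = 1 \mid c)\int m_1(u) f(u \mid c)\, du\big)$. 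The key observation is that $\int \pi(u) f(u \mid c)\, du = P(A = 1 \mid c)$, so this bracket is exactly $\operatorname{Cov}\big(m_1(U), \pi(U) \mid C = c\big)$.

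Both $m_1$ and $\pi$ are non-decreasing in $u$ by the positive-sign versions of Assumption~\ref{assumption: mono}(i) and (ii), so applying Lemma~\ref{lem: covariance_inequality} to the single random variable $U$ (conditionally on $C = c$) gives $\operatorname{Cov}(m_1(U), \pi(U) \mid C = c) \ge 0$ for every $c$. Since positivity (Assumption~\ref{assumption: causal}(ii)) makes the weight $f(c)/P(A = 1 \mid c)$ nonnegative, integrating over $c$ yields $\theta_1 \ge E[Y(1)]$. For the $a = 0$ case the identical manipulation produces $\operatorname{Cov}(m_0(U), 1 - \pi(U) \mid C = c)$, since the tilt is now by $P(A = 0 \mid u) = 1 - \pi(u)$; because $m_0$ is non-decreasing while $1 - \pi$ is non-increasing, this conditional covariance is $\le 0$, giving $\theta_0 \le E[Y(0)]$.

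I expect the main obstacle to be the bookkeeping in the second step—correctly exposing the conditional covariance—rather than any deep inequality. The two subtle points are that one must use each half of Assumption~\ref{assumption:nondifferential} separately ($Y \indep C \mid A, U$ to remove $C$ from the outcome regression, and $A \indep C \mid U$ to identify the Bayes tilt with $\pi(u)$), and that the cancellation which turns the bracket into a covariance hinges on the identity $E[\pi(U) \mid C = c] = P(A = 1 \mid c)$. An equivalent route, should the covariance algebra prove unwieldy, is to note that the Bayes tilt $f(u \mid A = 1, c)/f(u \mid c) \propto \pi(u)$ is a monotone likelihood ratio in $u$, so $U \mid A = 1, C = c$ stochastically dominates $U \mid C = c$; averaging over $c$ shows $h_1$ first-order stochastically dominates $f$, and the result follows since $m_1$ is non-decreasing (with the inequalities reversed for $a = 0$).
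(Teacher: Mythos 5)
Your proof is correct and follows essentially the same route as the paper's (which adapts Theorem 1 of VanderWeele's argument): expand the adjusted mean over $u$, use Bayes' rule with $A \indep C \mid U$ to expose the tilt $\pi(u)f(u \mid c)/P(A=1 \mid c)$, and sign the resulting conditional covariance of $m_a(U)$ and $\pi(U)$ given $C=c$ via Lemma~\ref{lem: covariance_inequality}. The only difference is presentational — you make the conditional covariance and the identity $E[\pi(U)\mid C=c] = P(A=1\mid c)$ explicit, whereas the paper states the inequality directly in its chain of displays.
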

\begin{lemma}
\label{lemma: unadjusted larger than adjusted}
Under the positive sign versions of assumptions \ref{assumption: mono}, \ref{assumption:prds}, and \ref{assumption:nondifferential},
\begin{equation*}
    E(Y \mid A = 1) \geq \int E(Y \mid A = 1, C = c) f(c) dc   \text{ and } E(Y \mid A = 0) \leq \int E(Y \mid A = 0, C = c) f(c) dc . 
\end{equation*}
\end{lemma}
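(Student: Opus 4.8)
The plan is to rewrite each unadjusted mean as an integral against the conditional density $f(c \mid A = a)$ and to recognize the gap between the unadjusted and $C$-standardized means as a covariance of two monotone functions of $C$. Concretely, iterated expectations give $E(Y \mid A = a) = \int E(Y \mid A = a, C = c)\, f(c \mid A = a)\, dc$, so the two quantities being compared share the common integrand $E(Y \mid A = a, C = c)$ and differ only in how they weight it: the unadjusted mean weights by the conditional density $f(c \mid A = a)$, whereas the adjusted mean weights by the marginal $f(c)$.

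First I would apply Bayes' rule to write $f(c \mid A = a) = \frac{P(A = a \mid C = c)}{P(A = a)}\, f(c)$, so that the difference between the unadjusted and adjusted means becomes
\[
\frac{1}{P(A = a)} \int E(Y \mid A = a, C = c)\, \bigl[P(A = a \mid C = c) - P(A = a)\bigr]\, f(c)\, dc .
\]
Since $P(A = a) = \int P(A = a \mid C = c)\, f(c)\, dc$, this integral is precisely $\operatorname{Cov}\bigl(E(Y \mid A = a, C),\, P(A = a \mid C)\bigr)$, a covariance of two functions of the single random variable $C$, scaled by the positive constant $1/P(A = a)$.

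Next I would invoke Lemma \ref{lem: proxy_u_similarity}, which under the stated assumptions guarantees that $E(Y \mid A = a, C = c)$ is non-decreasing in $c$ for $a = 0, 1$ and that $P(A = 1 \mid C = c)$ is non-decreasing in $c$. For $a = 1$, both factors in the covariance are non-decreasing in $C$, so Lemma \ref{lem: covariance_inequality} with $K = 1$ yields a non-negative covariance, giving $E(Y \mid A = 1) \geq \int E(Y \mid A = 1, C = c)\, f(c)\, dc$. For $a = 0$, the factor $P(A = 0 \mid C = c) = 1 - P(A = 1 \mid C = c)$ is non-increasing in $c$ while $E(Y \mid A = 0, C = c)$ remains non-decreasing; applying the covariance inequality to $E(Y \mid A = 0, C)$ and $-P(A = 0 \mid C)$, both of which are non-decreasing, shows the covariance is non-positive, which reverses the inequality exactly as required.

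The argument is short once the gap is identified as a covariance, so the main obstacle is the bookkeeping in the first two steps: correctly expanding $f(c \mid A = a)$ via Bayes' rule and verifying that subtracting the marginal $P(A = a)$ inside the integral produces exactly a centered covariance rather than an approximation. All of the substantive monotonicity content is deferred to Lemma \ref{lem: proxy_u_similarity}, whose proof is where the nondifferential mismeasurement and regression-dependence assumptions do their real work.
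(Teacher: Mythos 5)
Your proposal is correct and follows essentially the same route as the paper's proof: write the unadjusted mean via iterated expectation, reweight by Bayes' rule so the gap becomes $\operatorname{Cov}\bigl(E(Y \mid A = a, C),\, P(A = a \mid C)\bigr)/P(A=a)$, and sign it using Lemma \ref{lem: proxy_u_similarity} together with the covariance inequality of Lemma \ref{lem: covariance_inequality}. Your explicit handling of the $a=0$ case (negating the non-increasing factor) matches what the paper leaves as ``similarly derived.''
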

The bias attenuation property then immediately follows from the two previous lemmas.
 \begin{theorem}
 \label{thm: attenuation}
Let $\mu_a^{\text{unadj}} = E(Y \mid A = a)$ and $\mu_a^{\text{adj}} = \int  E(Y \mid A = a, C =c) f(c) dc$ for $a = 0,1$, and let $h, g$ be any non-decreasing functions. Under assumptions \ref{assumption: causal}, the positive sign versions of \ref{assumption: mono}, \ref{assumption:nondifferential}, and the positive sign version of \ref{assumption:prds} or \ref{assumption: mlr},
\begin{equation}
\label{eq: atteunation}
h[g\{\mu_1^{\text{unadj}}\} - g\{\mu_0^{\text{unadj}}\}] \geq h[g\{\mu_1^{\text{adj}} \} - g\{ \mu_0^{\text{adj}} \}] \geq h[g\{E[Y(1)] \}-g\{E[Y(0)]\}].
\end{equation}
 \end{theorem}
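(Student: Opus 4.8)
The plan is to chain together the two preceding lemmas and then push the resulting scalar inequalities through the monotone maps $g$ and $h$. First I would dispose of the disjunction in the hypotheses: by Proposition \ref{prop: mlr implies prd}, under nondifferential mismeasurement the likelihood ratio dependence Assumption \ref{assumption: mlr} implies the regression dependence Assumption \ref{assumption:prds}. Hence it suffices to prove the theorem assuming the positive sign version of \ref{assumption:prds}, which is exactly the hypothesis set already assumed in Lemmas \ref{lemma: adjusted larger than true} and \ref{lemma: unadjusted larger than adjusted}.

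Next I would record the two chains of scalar inequalities those lemmas supply. Lemma \ref{lemma: unadjusted larger than adjusted} gives $\mu_1^{\text{unadj}} \geq \mu_1^{\text{adj}}$ and $\mu_0^{\text{unadj}} \leq \mu_0^{\text{adj}}$, while Lemma \ref{lemma: adjusted larger than true} gives $\mu_1^{\text{adj}} \geq E[Y(1)]$ and $\mu_0^{\text{adj}} \leq E[Y(0)]$. The key structural observation is that the treated-arm quantities are ordered in decreasing fashion, $\mu_1^{\text{unadj}} \geq \mu_1^{\text{adj}} \geq E[Y(1)]$, whereas the control-arm quantities run in the opposite, increasing direction, $\mu_0^{\text{unadj}} \leq \mu_0^{\text{adj}} \leq E[Y(0)]$.

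I would then apply the non-decreasing $g$ to each chain, which preserves both orderings, and subtract. Because the control arm enters each contrast with a minus sign and its $g$-images run the opposite way, the subtraction reinforces rather than cancels the ordering: from $g\{\mu_1^{\text{unadj}}\} \geq g\{\mu_1^{\text{adj}}\}$ and $-g\{\mu_0^{\text{unadj}}\} \geq -g\{\mu_0^{\text{adj}}\}$ I obtain $g\{\mu_1^{\text{unadj}}\} - g\{\mu_0^{\text{unadj}}\} \geq g\{\mu_1^{\text{adj}}\} - g\{\mu_0^{\text{adj}}\}$, and the analogous comparison of the adjusted difference with the true difference yields $g\{\mu_1^{\text{adj}}\} - g\{\mu_0^{\text{adj}}\} \geq g\{E[Y(1)]\} - g\{E[Y(0)]\}$. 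Applying the non-decreasing $h$ to this chain of differences then gives \eqref{eq: atteunation}.

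The argument is essentially sign bookkeeping once the lemmas are in hand, so I do not expect a genuine obstacle. The one point requiring care is the directional asymmetry between the two arms: the chain would collapse if both arms biased in the same direction, and it is precisely the opposing signs of the treated and control biases, combined with the monotonicity of $g$, that make the transformed contrasts inherit the desired ordering before $h$ is applied.
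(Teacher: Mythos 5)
Your proposal is correct and follows essentially the same route as the paper: chain Lemmas \ref{lemma: adjusted larger than true} and \ref{lemma: unadjusted larger than adjusted} into the orderings $\mu_1^{\text{unadj}} \geq \mu_1^{\text{adj}} \geq E[Y(1)]$ and $\mu_0^{\text{unadj}} \leq \mu_0^{\text{adj}} \leq E[Y(0)]$, push them through $g$ and then $h$, and invoke Proposition \ref{prop: mlr implies prd} to reduce the Assumption \ref{assumption: mlr} case to Assumption \ref{assumption:prds}. Your explicit handling of the sign flip on the control arm (applying $g$ first and then negating, rather than negating inside $g$) is if anything slightly cleaner than the paper's phrasing.
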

 \begin{proof}
Lemmas \ref{lemma: adjusted larger than true} and \ref{lemma: unadjusted larger than adjusted} imply that $\mu_1^{\text{unadj}} \geq \mu_1^{\text{adj}} \geq E[Y(1)]$ and $-\mu_0^{\text{unadj}} \geq -\mu_0^{\text{adj}} \geq -E[Y(0)]$ when \ref{assumption: causal}-\ref{assumption:prds} hold. Applying the function $g$ to all quantities, adding them together, and then applying $h$ implies the result. We can swap Assumptions \ref{assumption: mlr} and \ref{assumption:prds} since the former implies the latter under assumption \ref{assumption:nondifferential} by Proposition \ref{prop: mlr implies prd}.
 \end{proof}
\begin{remark}
We make several comments. First, when the signs of the monotonicity of the outcome regression and propensity score in Assumption \ref{assumption: mono} are both non-increasing rather than non-decreasing in $u$, Equation \ref{eq: atteunation} continues to hold. If one is non-increasing while the other is non-decreasing, the inequalities in Equation \ref{eq: atteunation} must be flipped, but attenuation continues to hold. Second, for different choices of $g$ and $h$, Theorem $\ref{thm: attenuation}$ implies attenuation on the difference, ratio, and odds ratio effect scales. 
Third, in the supplementary material, we demonstrate that attenuation holds for the effect on the treated. Also, similar to \cite{Ogburn2013} and \cite{Ding2017}, if the unmeasured confounder $U$ is multivariate with independent components, and the monotonicity conditions hold for each component of $U$, attenuation will hold. Finally,
we point out that Assumptions \ref{assumption: mono}-\ref{assumption:prds} together have testable implications. Namely, that $P(A = 1 \mid C=c)$  and $E(Y \mid A = a , C=c)$ are non-decreasing in $c$, and that $\mu_1^{\text{adj}} \leq \mu_1^{\text{unadj}}$ and $\mu_0^{\text{unadj}} \leq \mu_0^{\text{adj}}$. 
\end{remark}

\section{Examining Assumptions \ref{assumption:prds} and \ref{assumption: mlr}}
The utility of Theorem \ref{thm: attenuation} rests entirely on the plausibility and interpretability of 
assumptions \ref{assumption:prds} and \ref{assumption: mlr}. Assumptions \ref{assumption:prds} and \ref{assumption: mlr} are classical notions of positive dependence of random variables, and the former is often invoked in the multiple testing literature as a sufficient condition for false discovery rate control \citep{Benjamini2001}. However, of the assumptions required to establish Theorem \ref{thm: attenuation}, \ref{assumption:prds} and \ref{assumption: mlr} are the hardest to interpret. First, for Assumption \ref{assumption:prds}, a structural causal model in the spirit of \cite{Pearl2009} (and as displayed in Figure \ref{fig:dag}) would posit $U$ having a direct influence or arrow into $C$, not vice versa. The condition that $P(C \geq c \mid U = u)$ is non-decreasing in $u$ would be quite natural, but is not equivalent to Assumption \ref{assumption:prds} in general. Through the next three propositions, we provide one specific but common situation under which Assumption \ref{assumption:prds} holds, and two general settings under which \ref{assumption: mlr} holds, which, as the reader may recall, implies Assumption \ref{assumption:prds} under Assumption \ref{assumption:nondifferential}. Then, we provide several examples that fall under these general settings.

\begin{proposition}[$C$ non-decreasing in $U$ model]
\label{prop: non-decreasing in u}
Suppose $C = \chi(U)$ where $\chi(u)$ is some deterministic function that is non-decreasing in $u$. Then Assumption \ref{assumption:prds} holds. Moreover, if $C^* = \chi^*(C)$ for some $C$ that satisfies Assumption \ref{assumption:prds} and $\chi^*$ non-decreasing, then Assumption \ref{assumption:prds} holds for $C^*$ as well. 
\end{proposition}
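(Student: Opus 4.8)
The plan is to isolate one elementary monotonicity fact about level sets of a non-decreasing map and then apply it twice. \emph{Key observation:} if $\psi$ is non-decreasing and $W=\psi(V)$, then for any two values $w<w'$ in the support of $W$ the level sets $\psi^{-1}(w)$ and $\psi^{-1}(w')$ are ordered, in the sense that every point of $\psi^{-1}(w')$ is at least every point of $\psi^{-1}(w)$; indeed, $v'<v$ with $\psi(v)=w<w'=\psi(v')$ would force $w'=\psi(v')\le\psi(v)=w$, a contradiction. Conditioning on $\{W=w\}$ therefore confines $V$ to the level set $\psi^{-1}(w)$, and any further conditioning (on $A$, say) merely reweights mass within that set without moving it across level sets. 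Hence for any non-decreasing $g$ and any auxiliary conditioning event, $E[g(V)\mid W=w']\ge E[g(V)\mid W=w]$ (and likewise with the event appended), because the first expectation averages values of $g$ that are all at least those averaged by the second.

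\emph{First claim.} Take $V=U$, $\psi=\chi$, and $W=C$. For each fixed threshold $u_0$ the map $u\mapsto \mathbbm{1}\{u\ge u_0\}$ is non-decreasing, so the key observation applied with $g=\mathbbm{1}\{\,\cdot\,\ge u_0\}$ gives that $P(U\ge u_0\mid C=c)=E[\mathbbm{1}\{U\ge u_0\}\mid C=c]$ is non-decreasing in $c$; this is Assumption \ref{assumption:prds}(i). Appending the event $\{A=a\}$ gives the same monotonicity for $P(U\ge u_0\mid C=c,A=a)$, which is Assumption \ref{assumption:prds}(ii), for $a=0,1$ and every $u_0$.

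\emph{Second claim.} Now take $V=C$, $\psi=\chi^*$, and $W=C^*=\chi^*(C)$. Since $C^*$ is a deterministic function of $C$, we have $U\indep C^*\mid C$ and $U\indep C^*\mid (C,A)$, whence $P(U\ge u_0\mid C^*=c^*)=E[P(U\ge u_0\mid C)\mid C^*=c^*]$ and $P(U\ge u_0\mid C^*=c^*,A=a)=E[P(U\ge u_0\mid C,A=a)\mid C^*=c^*,A=a]$. The inner maps $c\mapsto P(U\ge u_0\mid C=c)$ and $c\mapsto P(U\ge u_0\mid C=c,A=a)$ are non-decreasing exactly because $C$ satisfies Assumption \ref{assumption:prds}(i) and (ii). Feeding these non-decreasing functions into the key observation (with $\{A=a\}$ appended for the second) shows that both conditional survival functions are non-decreasing in $c^*$, which is Assumption \ref{assumption:prds} for $C^*$.

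\emph{Main obstacle.} All the conceptual content sits in the key observation; the only real work is measure-theoretic. When $U$ or $C$ is continuous the level sets $\chi^{-1}(c)$ and $(\chi^*)^{-1}(c^*)$ may be singletons or null sets, so the conditional laws must be handled through regular conditional distributions (or densities) rather than elementary conditioning, and one must check that conditioning on $\{A=a\}$ genuinely keeps the support inside the relevant level set. I expect this bookkeeping, rather than any inequality, to be the crux; once the conditional distributions are set up so that $\{C^*=c^*\}$ pins $C$ to $(\chi^*)^{-1}(c^*)$, the ordered-support comparison and the monotone-averaging inequality carry over verbatim, with the positivity in Assumption \ref{assumption: causal}(ii) ensuring the conditional expectations given $A=a$ are well defined.
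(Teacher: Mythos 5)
Your proposal is correct and follows essentially the same route as the paper's proof: the "key observation" about ordered level sets of a non-decreasing map is exactly the paper's argument that the preimages $\mathrm{pre}(c)$ and $\mathrm{pre}(c')$ are disjoint and ordered, and your tower-property step for $C^*=\chi^*(C)$ (averaging the non-decreasing function $c\mapsto P(U\ge u_0\mid C=c)$ over ordered preimages) is the paper's second-claim argument verbatim. The only difference is presentational — you factor both claims through one lemma and are more explicit about the conditional independence $U\indep C^*\mid C$ and the measure-theoretic caveats — which does not change the substance.
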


\begin{proposition}[Log-concave additive noise, Example 12 in \cite{Lehmann1966}]
\label{prop: log-concave additive noise}
Suppose $C = U + \epsilon$, where $\epsilon$ is random noise that is independent of $U$. If $f_\epsilon$, the density of $\epsilon$, is log-concave, then Assumption \ref{assumption: mlr} holds.
\end{proposition}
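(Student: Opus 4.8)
The plan is to reduce the likelihood ratio condition to a single pointwise inequality on $f_\epsilon$ and then settle it directly from concavity of $\log f_\epsilon$. Under the additive noise model with $\epsilon \indep U$, the conditional density factors as $f(c \mid u) = f_\epsilon(c - u)$. Substituting this into the definition of positive likelihood ratio dependence, Assumption \ref{assumption: mlr} (positive sign) is equivalent to the requirement that for all $c' > c$ and $u' > u$,
\[
f_\epsilon(c' - u') \, f_\epsilon(c - u) \;\geq\; f_\epsilon(c - u') \, f_\epsilon(c' - u).
\]
Thus the whole proposition collapses to verifying this one inequality from log-concavity of $f_\epsilon$.

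Next I would expose the majorization structure hidden in the four arguments. Writing $a_1 = c' - u'$, $a_2 = c - u$, $b_1 = c - u'$, and $b_2 = c' - u$, a direct computation gives $a_1 + a_2 = b_1 + b_2 = (c + c') - (u + u')$, while the strict inequalities $c < c'$ and $u < u'$ force $b_1 \leq \min(a_1, a_2)$ and $b_2 \geq \max(a_1, a_2)$. Hence both $a_1$ and $a_2$ lie in the interval $[b_1, b_2]$ and can be written as convex combinations $a_i = \lambda_i b_1 + (1 - \lambda_i) b_2$ with $\lambda_i \in [0,1]$; the equal-sum relation then forces $\lambda_1 + \lambda_2 = 1$.

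With this structure in place, the inequality follows by taking logarithms and applying concavity. Setting $\phi = \log f_\epsilon$, which is concave by hypothesis, the two chord bounds $\phi(a_i) \geq \lambda_i \phi(b_1) + (1 - \lambda_i) \phi(b_2)$ add (using $\lambda_1 + \lambda_2 = 1$) to $\phi(a_1) + \phi(a_2) \geq \phi(b_1) + \phi(b_2)$, which exponentiates to the desired inequality. This is precisely the classical statement that a log-concave density is a P\'olya frequency function of order two, so as an alternative I could simply invoke the total-positivity characterization and cite Example 12 in \cite{Lehmann1966}.

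The one place requiring care --- and the step I expect to be the main obstacle --- is the boundary of the support, where $f_\epsilon$ may vanish and $\log f_\epsilon = -\infty$. A log-concave density has convex (interval) support, so I would dispose of the degenerate configurations by a short case analysis: if either $b_1$ or $b_2$ lies outside the support, the right-hand side is zero and the inequality holds trivially; otherwise $b_1, b_2$ lie in the support, convexity of the support places $a_1, a_2$ in it as well, and the concavity argument above applies verbatim on the interior. Handling these edge cases is the only non-routine point; the algebraic core is clean.
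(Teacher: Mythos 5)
Your argument is correct. For the record, the paper does not actually prove this proposition: it simply cites Example 8.2.1 of \cite{Lehmann2022} (equivalently Example 12 of \cite{Lehmann1966}), which is the fallback you mention at the end of your third paragraph. What you have written out is the standard total-positivity argument that underlies that citation, and it checks out in every detail: the reduction to $f_\epsilon(c'-u')f_\epsilon(c-u)\geq f_\epsilon(c-u')f_\epsilon(c'-u)$ is immediate from $f(c\mid u)=f_\epsilon(c-u)$; the four arguments do satisfy $a_1+a_2=b_1+b_2$ with $b_1<\min(a_1,a_2)\leq\max(a_1,a_2)<b_2$ (note $b_2-b_1=(c'-c)+(u'-u)>0$, so the convex-combination weights are well defined and the equal-sum relation does force $\lambda_1+\lambda_2=1$); and the chord bounds for the concave $\phi=\log f_\epsilon$ sum to exactly the required inequality. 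Your handling of the support boundary is also the right move --- if either $b_1$ or $b_2$ falls outside the (convex) support the right-hand side vanishes, and otherwise all four points lie in the support so $\phi$ is finite throughout. The only thing your write-up buys beyond the paper is self-containedness, which is a genuine improvement given that the paper leans entirely on an external reference here.
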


\begin{proposition}[Exponential families, \cite{Borges1963}]
\label{prop: nep}
Suppose the conditional density of $C \mid U$ follows $f(c \mid u) = h(c) \exp(\eta(u) \times T(c) - A(u))$, where $\eta$ is a non-decreasing function and $T$ is a strictly increasing function. Then Assumption \ref{assumption: mlr} holds.
\end{proposition}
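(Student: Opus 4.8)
The plan is to verify the defining inequality for positive likelihood ratio dependence directly from the assumed parametric form. Recall that $C$ has positive likelihood ratio dependence on $U$ precisely when $f(c' \mid u')f(c \mid u) \geq f(c \mid u')f(c' \mid u)$ for all $c' > c$ and $u' > u$. Since the support of $C \mid U = u$ is governed by $h(c)$ and does not vary with $u$, every density appearing in this inequality is strictly positive on a common set, so I can work with the ratio of the two sides without worrying about vanishing denominators.

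Next I would substitute $f(c \mid u) = h(c)\exp(\eta(u)T(c) - A(u))$ into both sides and form the ratio
\[
\frac{f(c' \mid u')\,f(c \mid u)}{f(c \mid u')\,f(c' \mid u)}.
\]
The factors $h(c)h(c')$ cancel, as do the baseline terms involving $A(u)$ and $A(u')$, leaving only the exponential of $\eta(u')T(c') + \eta(u)T(c) - \eta(u')T(c) - \eta(u)T(c')$. The key algebraic step is to recognize that this exponent factors as $[\eta(u') - \eta(u)]\,[T(c') - T(c)]$.

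Finally, I would read off the sign. Because $\eta$ is non-decreasing and $u' > u$, the first factor is nonnegative; because $T$ is strictly increasing and $c' > c$, the second factor is positive. Hence the exponent is nonnegative, the ratio is at least one, and the desired inequality holds, establishing the positive sign version of Assumption \ref{assumption: mlr}.

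I expect essentially no serious obstacle here: the substance is the clean factorization of the log-likelihood-ratio into a product of an increment in $\eta$ and an increment in $T$, which is exactly the total-positivity structure built into a one-parameter exponential family. The only points requiring mild care are confirming common support across $u$ (guaranteed since $h$ does not depend on $u$) and tracking the direction of monotonicity---if $\eta$ were instead non-increasing, the identical computation would yield the negative sign version, which is equally permitted by Assumption \ref{assumption: mlr}.
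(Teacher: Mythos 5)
Your proposal is correct and follows essentially the same route as the paper: a direct verification of the likelihood-ratio inequality from the exponential-family form, with the exponent reducing to $[\eta(u')-\eta(u)][T(c')-T(c)]\geq 0$ (the paper phrases this equivalently as $f(c\mid u')/f(c\mid u)\propto\exp(T(c)[\eta(u')-\eta(u)])$ being non-decreasing in $c$). The only cosmetic difference is that the paper spells out the degenerate case $h(c)=0$, where both sides vanish and the inequality holds trivially, which your common-support remark covers implicitly.
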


All of the scenarios in the previous three propositions place realistic and interpretable restrictions on the distribution $C \mid U$. Importantly, in constrast to previous work, they place no restrictions on the distribution or support of $U$. The main distinction between the conditions is that the latter condition imposes a distributional restriction, while the first two impose almost-sure restrictions. Thus, the first two are more amenable to the case where $C$ is truly a mismeasured version of $U$, while the latter is also amenable to the case where $C$ is a negative control outcome or treatment known to be correlated with $U$. Although we have outlined very general conditions under which attenuation will be satisfied, it is instructive to look a few specific examples, some of which have been discussed in previous works on bias attenuation. Detailed justification for each example is contained in the supplementary material.

\begin{example}[Normal-Normal]
Suppose $(U, C)$ are bivariate normal and standard normals marginally, with correlation $\rho > 0$. This situation satisfies Assumption \ref{assumption: mlr}.
\end{example}
\begin{example}[Binary-Binary]
Suppose $U, C$ are binary. Then Assumptions \ref{assumption:prds} and \ref{assumption: mlr} hold if $U$ and $C$ are nonnegatively correlated. If they are negatively correlated, then non-increasing versions of Assumptions \ref{assumption:prds} and \ref{assumption: mlr} hold. The reasoning was discussed in the remark following Proposition \ref{prop: mlr implies prd}. This case was originally investigated in \cite{Ogburn2012}.

\end{example}

\begin{example}[Binary Regression]
Suppose $U \sim F$ and $C \mid U$ either follows a probit or logistic regression. This is a special case of the exponential family, so Assumption \ref{assumption: mlr} is satisfied.

\end{example}
\begin{example}[Additive Noise Differential Privacy]
To satisfy certain privacy constraints, independent noise is often added to the true measurements, such as in the 2020 U.S. Census. Two of the most common mechanisms are additive Gaussian and Laplacian noise \citep{Dwork2013}. Both the Gaussian and Laplacian densities are log-concave, so Assumption \ref{assumption: mlr} is satisfied.
\end{example}

\begin{example}[Coarsened/Binned Variable]
Suppose $U \sim F$ comes from some unknown distribution $F$ and $C$ has $K$ levels. Suppose $\mu_k < \mu_{k+1}$ are some fixed numbers for all $k = 1,\ldots,K-1$ and $\mu_0$ and $\mu_K$ are the essential infimum and supremums of $U$. If $C = k$ when $U \in [\mu_{k-1}, \mu_k]$, then $C$ is a non-decreasing function of $U$, so Assumption \ref{assumption:prds} holds. This scenario is common in practice. Continuous confounders like age, education, or income are often binned into ranges. When $K = 2$, $C = \mathbbm{1}\{U > \theta\}$ for some threshold $\theta$, which was investigated in \cite{Gabriel2022} under specific models for the unmeasured confounder $U$. 
\end{example}

The latter two examples correspond to situations where $C$ is a mismeasured version of $U$, whereas for the previous three, $C$ could plausibly be a proxy of $U$, such as a negative control outcome or exposure. In the next section, we compare our results to those from the previous literature.

\section{A closer comparison to previous literature}
\subsection{Binary $U$, Binary $C$}
We have given an alternative proof that demonstrates that for any binary unmeasured confounder $U$ and nondifferentially mismeasured $C$, adjusting for $C$ attenuates the bias when Assumption \ref{assumption: mono} holds. Such results were originally obtained in \cite{Ogburn2012} using a different proof strategy. Additional attenuation results for this scenario without assuming Assumptions \ref{assumption: mono} were derived in \cite{Pena2020}.
\subsection{A dichotomized unmeasured confounder}
We have generalized results from \cite{Gabriel2022}, who considered a dichotomized confounder under specific distributions of $U \mid A$ and outcome models $E(Y \mid A, U)$. We have shown that under Assumption \ref{assumption: mono}, when $C$ is a binned version of $U$, attenuation holds. Besides providing some specific examples of when attenuation by dichotomization is guaranteed, \cite{Gabriel2022} also wrote down several counterexamples of when dichotomization fails to attenuate bias. In the supplementary material, we examine those counterexamples and demonstrate that in all cases, either Assumption \ref{assumption: mono} (i) or (ii) fails to hold.

\subsection{Ordinal $U$ and $C$}
Although Assumptions \ref{assumption:prds} and \ref{assumption: mlr} apply to ordinal unmeasured confounders, the connection between these assumptions and the tapered distribution assumption introduced in \cite{Ogburn2013} is not obvious. In this subsection, we investigate the relationships between the assumptions. Suppose $U$ and $C$ are both ordinal with $K$ levels, labeled $1, \ldots, K$ without loss of generality. We state the tapered assumption from \cite{Ogburn2013} below.

\begin{assumption}[Tapered misclassification - \cite{Ogburn2013}]
\label{assumption: tapered}
Let 
$p_{ij} = P(C = i \mid U = j)$. 
The misclassification probabilities are tapered, which means that when 
$j \leq k \leq i$, 
$p_{ij} \leq  p_{ik}$ and 
$p_{ji} \leq p_{ki}$, and when 
$i \leq j \leq k$, and 
$p_{ij} \geq  p_{ik}$ and $p_{ji} \geq p_{ki}$.
\end{assumption}
Borrowing the language of \cite{Ogburn2013}, Assumption \ref{assumption: tapered} ``means that
the probability of correct classification is at least as great as the probability of misclassification into any
one level and that, for a fixed level $i$ of either $C$ or $U$, the misclassification probabilities are non-increasing in each direction away from $i$." This means that viewing the $p_{ij}$ as entries of a matrix, starting at a diagonal entry $p_{ii}$, the probabilities decrease as you move away from $p_{ii}$ horizontally and vertically in either direction. One notable difference between the tapered property and positive regression or monotone likelihood ratio dependence is that the latter two are automatically satisfied if $C$ is independent of $U$ whereas the former is not. We now state a result relating Assumption \ref{assumption: tapered} to the flipped version of Assumption \ref{assumption:prds}(i). 

\begin{proposition}
\label{prop: tapered implies prds}
Let $U$ and $C$ be ordinal with $K$ levels, and suppose Assumption \ref{assumption: tapered} holds. Then $P(C \geq i \mid U = j)$ is non-decreasing in $j$. In other words, tapered misclassification implies $C$ is positive regression dependent on $U$.
\end{proposition}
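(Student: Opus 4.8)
The plan is to prove the equivalent statement that the conditional survival function $S_j(i) := P(C \geq i \mid U = j) = \sum_{\ell=i}^K p_{\ell j}$ is non-decreasing in $j$; since $j$ ranges over the finite ordered set $\{1,\ldots,K\}$, it suffices to show $S_{j+1}(i) \geq S_j(i)$ for every threshold $i$ and every $j \in \{1,\ldots,K-1\}$. Equivalently, I want to show that the law of $C \mid U = j+1$ first-order stochastically dominates that of $C \mid U = j$.

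The crucial first step is to restrict attention to \emph{adjacent} columns $j$ and $j+1$ and to analyze the column difference $d_\ell := p_{\ell,j+1} - p_{\ell j}$. Adjacency is what makes the argument clean: the indices $\ell \in \{1,\ldots,K\}$ split into $\ell \leq j$ and $\ell \geq j+1$ with nothing strictly in between, and exactly these two cases are governed by the two halves of the tapered assumption. For $\ell \geq j+1$ we have $j \leq j+1 \leq \ell$, so the first tapered inequality ($p_{ij} \leq p_{ik}$ for $j \leq k \leq i$, read with $i = \ell$ and $k = j+1$) gives $p_{\ell j} \leq p_{\ell,j+1}$, i.e. $d_\ell \geq 0$. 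For $\ell \leq j$ we have $\ell \leq j \leq j+1$, so the second tapered inequality ($p_{ij} \geq p_{ik}$ for $i \leq j \leq k$, read with $i = \ell$ and $k = j+1$) gives $p_{\ell j} \geq p_{\ell,j+1}$, i.e. $d_\ell \leq 0$. Thus $(d_\ell)$ exhibits a single sign change: non-positive on the lower rows, non-negative on the upper rows.

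With this single-crossing sign pattern in hand, I would combine it with the normalization $\sum_{\ell=1}^K d_\ell = 1 - 1 = 0$ to conclude $S_{j+1}(i) - S_j(i) = \sum_{\ell=i}^K d_\ell \geq 0$ for every $i$, via a short two-case argument. If $i \geq j+1$, every term in $\sum_{\ell=i}^K d_\ell$ has $\ell \geq j+1$ and is non-negative, so the sum is non-negative. If $i \leq j$, I rewrite $\sum_{\ell=i}^K d_\ell = -\sum_{\ell=1}^{i-1} d_\ell$ using normalization, and every term in $\sum_{\ell=1}^{i-1} d_\ell$ has $\ell \leq i-1 \leq j$ and is non-positive, so its negation is non-negative. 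Either way $S_{j+1}(i) \geq S_j(i)$, which is precisely the claimed positive regression dependence of $C$ on $U$.

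I do not anticipate a genuinely hard obstacle, and notably I would not need the paired inequalities $p_{ji} \leq p_{ki}$ and $p_{ji} \geq p_{ki}$: those constrain variation across a column of the matrix (the shape of $C \mid U = i$ across $C$'s own levels) and are irrelevant for this direction. The one real subtlety, and the step I would flag, is the reduction to adjacent columns. For non-adjacent $j < j'$, the difference $p_{\ell j'} - p_{\ell j}$ need not have a definite sign when $j < \ell < j'$, since both entries lie below the diagonal maximum $p_{\ell\ell}$ but in no forced order; the clean single-crossing structure is available only between neighboring levels. Working column-by-column on adjacent pairs is therefore essential rather than merely convenient, and the only remaining care is matching the indices $i,j,k$ in the tapered assumption to the roles of $\ell$, $j$, and $j+1$ above.
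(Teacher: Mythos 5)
Your proposal is correct and follows essentially the same route as the paper's proof: both reduce to adjacent columns $j$ and $j+1$, use only the ``horizontal'' half of the tapered condition to get the single-crossing sign pattern of $p_{\ell,j+1}-p_{\ell j}$, and split on whether the threshold $i$ lies above or below the column index (your use of $\sum_\ell d_\ell = 0$ is just the paper's complementation step). Your closing remark that the column-wise inequalities $p_{ji}\leq p_{ki}$, $p_{ji}\geq p_{ki}$ are not needed is also made explicitly in the paper.
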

Interestingly, the proof we present is valid if $P_{C \mid U}$ is tapered only in the horizontal direction \citep{Wang2014}, which means when $j \leq  k \leq i$, $p_{ij} \leq p_{ik}$,
and when $i \leq  j \leq k$, $p_{ij} \geq p_{ik}$. Although Proposition \ref{prop: tapered implies prds} uncovers a nice connection between tapered misclassification and positive regression dependence, the positive regression dependence from Assumption \ref{assumption:prds}(i) is in the opposite direction. Unlike monotone likelihood ratio dependence, positive regression dependence and tapered misclassification are not symmetric properties. Figure \ref{fig: ordinal venn diagram} displays the relationships between the three notions of dependence for ordinal variables in a Venn diagram. We give examples of all four possibilities in the supplementary material. For ordinal variables, there exist distributions $C \mid U$ that satisfy Assumptions \ref{assumption:prds} and/or \ref{assumption: mlr} but not Assumption \ref{assumption: tapered}, and vice versa. Therefore, though our assumptions do not generalize the tapered assumption, we have expanded the set of ordinal distributions $C \mid U$ for which attenuation rigorously holds.
\begin{figure}
    \centering
\begin{tikzpicture}[scale = 1]
    \fill[gray!0] (0,0) rectangle (4,3);
    \fill[gray!0] (1.5,1.5) circle (1);
    \fill[gray!0] (2.5,1.5) circle (1);
    \node at (1,1.5) {\scriptsize Taper};
    \node at (3,1.5) {\scriptsize MLR$^*$};
    \node at (0.5,2.5) {\scriptsize PRD};
    \draw (0,0) rectangle (4,3);
    \draw (1.5,1.5) circle (1);
    \draw (2.5,1.5) circle (1);
\end{tikzpicture}
    \caption{A Venn diagram outlining the relationships between tapered (Taper), positively regression dependent (PRD), and positive monotone likelihood ratio dependent (MLR) distributions. The star indicates that MLR is a symmetric property.}
    \label{fig: ordinal venn diagram}
\end{figure}
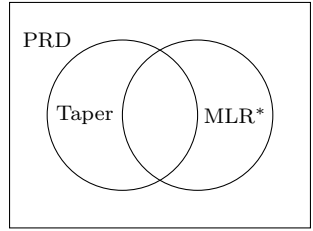
One case that our results do not directly apply to is when $U$ and $C$ are discrete with more than two levels, but with no ordering, as studied in \cite{Pena2021}. This is because our dependence assumptions require $U$ and $C$ to have natural orderings. 

\section{Discussion}
Adjusting for a nondifferentially mismeasured confounder is conventional wisdom that is practiced widely. Previous analytic results demonstrate that the conventional wisdom is justified in several realistic settings. We have substantially expanded the set of scenarios where the conventional wisdom is justified, including for additive noise and coarsening. Promising future directions would be to obtain similarly general results when the measurement error is not non-differential. Another question is whether general attenuation results can be obtained for all continuous linear functionals of $E(Y \mid A, X)$ by imposing a monotonicity assumption on the Riesz representer of the functional.

\section*{Acknowledgments}
The authors thank Professor Betsy Ogburn for helpful suggestions.

\bibliographystyle{apalike}
\bibliography{references}

\begin{thebibliography}{}

\bibitem[Benjamini and Yekutieli, 2001]{Benjamini2001}
Benjamini, Y. and Yekutieli, D. (2001).
\newblock {The control of the false discovery rate in multiple testing under dependency}.
\newblock {\em The Annals of Statistics}, 29(4).

\bibitem[Borges and Pfanzagl, 1963]{Borges1963}
Borges, R. and Pfanzagl, J. (1963).
\newblock {A characterization of the one parameter exponential family of distributions by monotonicity of likelihood ratios}.
\newblock {\em Zeitschrift fur Wahrscheinlichkeitstheorie und Verwandte Gebiete}, 2(2):111--117.

\bibitem[Chen et~al., 2023]{Chen2023a}
Chen, K., Wang, B., and Small, D.~S. (2023).
\newblock {A Differential Effect Approach to Partial Identification of Treatment Effects}.

\bibitem[Chiba, 2009]{Chiba2009}
Chiba, Y. (2009).
\newblock {The Sign of the Unmeasured Confounding Bias under Various Standard Populations}.
\newblock {\em Biometrical Journal}, 51(4):670--676.

\bibitem[Ding et~al., 2017]{Ding2017}
Ding, P., Vanderweele, T., and Robins, J.~M. (2017).
\newblock {Instrumental variables as bias amplifiers with general outcome and confounding}.
\newblock {\em Biometrika}, 104(2):291--302.

\bibitem[Dwork and Roth, 2013]{Dwork2013}
Dwork, C. and Roth, A. (2013).
\newblock {The algorithmic foundations of differential privacy}.
\newblock {\em Foundations and Trends in Theoretical Computer Science}, 9(3-4):211--487.

\bibitem[Efron, 1965]{Efron1965}
Efron, B. (1965).
\newblock {Increasing Properties of Polya Frequency Function}.
\newblock {\em The Annals of Mathematical Statistics}, 36(1):272--279.

\bibitem[Esary et~al., 1967]{Esary1967}
Esary, J.~D., Proschan, F., and Walkup, D.~W. (1967).
\newblock {Association of Random Variables, with Applications}.
\newblock {\em The Annals of Mathematical Statistics}, 38(5):1466--1474.

\bibitem[Gabriel et~al., 2022]{Gabriel2022}
Gabriel, E.~E., Pe{\~{n}}a, J.~M., and Sj{\"{o}}lander, A. (2022).
\newblock {Bias attenuation results for dichotomization of a continuous confounder}.
\newblock {\em Journal of Causal Inference}, 10(1):515--526.

\bibitem[Ghassami et~al., 2023]{Ghassami2023}
Ghassami, A., Shpitser, I., and Tchetgen, E.~T. (2023).
\newblock {Partial Identification of Causal Effects Using Proxy Variables}.

\bibitem[Greenland, 1980]{GREENLAND1980}
Greenland, S. (1980).
\newblock {The Effect of Misclassification in the Presence of Covariates}.
\newblock {\em American Journal of Epidemiology}, 112(4):564--569.

\bibitem[Karlin and Rubin, 1956]{Karlin1956}
Karlin, S. and Rubin, H. (1956).
\newblock {Distributions Possessing a Monotone Likelihood Ratio}.
\newblock {\em Journal of the American Statistical Association}, 51(276):637.

\bibitem[Kuroki and Pearl, 2014]{Kuroki2014}
Kuroki, M. and Pearl, J. (2014).
\newblock {Measurement bias and effect restoration in causal inference}.
\newblock {\em Biometrika}, 101(2):423--437.

\bibitem[Lehmann, 1966]{Lehmann1966}
Lehmann, E.~L. (1966).
\newblock {Some Concepts of Dependence}.
\newblock {\em The Annals of Mathematical Statistics}, 37(5):1137--1153.

\bibitem[Lehmann and Romano, 2022]{Lehmann2022}
Lehmann, E.~L. and Romano, J.~P. (2022).
\newblock {\em {Testing Statistical Hypotheses Fourth Edition}}.

\bibitem[Lipsitch et~al., 2010]{Lipsitch2010}
Lipsitch, M., Tchetgen~Tchetgen, E., and Cohen, T. (2010).
\newblock {Negative controls: a tool for detecting confounding and bias in observational studies}.
\newblock {\em Epidemiology}, 21(3):383--388.

\bibitem[Miao et~al., 2018]{Miao2018IdentifyingConfounder}
Miao, W., Geng, Z., and Tchetgen~Tchetgen, E.~J. (2018).
\newblock {Identifying causal effects with proxy variables of an unmeasured confounder}.
\newblock {\em Biometrika}, 105(4):987--993.

\bibitem[Ogburn and VanderWeele, 2012]{Ogburn2012}
Ogburn, E.~L. and VanderWeele, T.~J. (2012).
\newblock {On the Nondifferential Misclassification of a Binary Confounder}.
\newblock {\em Epidemiology}, 23(3):433--439.

\bibitem[Ogburn and Vanderweele, 2013]{Ogburn2013}
Ogburn, E.~L. and Vanderweele, T.~J. (2013).
\newblock {Bias attenuation results for nondifferentially mismeasured ordinal and coarsened confounders}.
\newblock {\em Biometrika}, 100(1):241--248.

\bibitem[Park et~al., 2024]{Park2024SingleControl}
Park, C., Richardson, D.~B., and Tchetgen Tchetgen, E.~J. (2024).
\newblock {Single proxy control}.
\newblock {\em Biometrics}, 80(2).

\bibitem[Pearl, 2009]{Pearl2009}
Pearl, J. (2009).
\newblock {\em {Causality}}.
\newblock Cambridge University Press.

\bibitem[Pe{\~{n}}a, 2020]{Pena2020}
Pe{\~{n}}a, J.~M. (2020).
\newblock {On the Monotonicity of a Nondifferentially Mismeasured Binary Confounder}.
\newblock {\em Journal of Causal Inference}, 8(1):150--163.

\bibitem[Pe{\~{n}}a et~al., 2021]{Pena2021}
Pe{\~{n}}a, J.~M., Balgi, S., Sj{\"{o}}lander, A., and Gabriel, E.~E. (2021).
\newblock {On the bias of adjusting for a non-differentially mismeasured discrete confounder}.
\newblock {\em Journal of Causal Inference}, 9(1):229--249.

\bibitem[Rosenbaum, 1989]{rosenbaum1989}
Rosenbaum, P.~R. (1989).
\newblock {The Role of Known Effects in Observational Studies}.
\newblock {\em Biometrics}, 45(2):557.

\bibitem[Rosenbaum, 2006]{Rosenbaum2006}
Rosenbaum, P.~R. (2006).
\newblock {Differential effects and generic biases in observational studies}.
\newblock {\em Biometrika}, 93(3):573--586.

\bibitem[Sj{\"{o}}lander et~al., 2022]{Sjolander2022}
Sj{\"{o}}lander, A., Pe{\~{n}}a, J.~M., and Gabriel, E.~E. (2022).
\newblock {Bias results for nondifferential mismeasurement of a binary confounder}.
\newblock {\em Statistics {\&} Probability Letters}, 186:109474.

\bibitem[Tchetgen~Tchetgen, 2014]{TchetgenTchetgen2014}
Tchetgen~Tchetgen, E. (2014).
\newblock {The Control Outcome Calibration Approach for Causal Inference With Unobserved Confounding}.
\newblock {\em American Journal of Epidemiology}, 179(5):633--640.

\bibitem[VanderWeele, 2008]{VanderWeele2008}
VanderWeele, T.~J. (2008).
\newblock {The Sign of the Bias of Unmeasured Confounding}.
\newblock {\em Biometrics}, 64(3):702--706.

\bibitem[Wang and Gustafson, 2014]{Wang2014}
Wang, D. and Gustafson, P. (2014).
\newblock {On the Impact of Misclassification in an Ordinal Exposure Variable}.
\newblock {\em Epidemiologic Methods}, 3(1).

\end{thebibliography}

\newpage
\appendix
\renewcommand{\thetheorem}{\thesection.\arabic{theorem}}
\renewcommand{\thelemma}{\thesection.\arabic{lemma}}
\renewcommand{\theexample}{\thesection.\arabic{example}}
\setcounter{theorem}{0}
\setcounter{lemma}{0}
\setcounter{example}{0}

\section{Proofs of results from Section 3}
\subsection{Proof of Lemma \ref{lemma: u binary prds}}
\begin{proof}[Proof of Lemma \ref{lemma: u binary prds}]
 Since $U$ is binary, we only need to check whether $P(U = 1 \mid A = 1, C = c)$ is increasing in $c$. Observe that by definition and Assumption \ref{assumption:nondifferential},
\begin{equation*}
\begin{aligned}
&P(U = 1 \mid A = 1, C = c) \\&= \frac{P(A = 1 \mid U = 1, C = c) P(U = 1\mid C = c)}{P(A = 1 \mid U=1, C = c)P(U = 1 \mid C = c) + P(A = 1 \mid U=0, C = c)P(U = 0, C = c)} \\&= \frac{P(A = 1 \mid U = 1) P(U = 1\mid C = c)}{P(A = 1 \mid U=1)P(U = 1 \mid C = c) + P(A = 1 \mid U=0)P(U = 0 \mid C = c)} \\ &= \frac{a l(c)}{a l(c) + b(1 - l(c))},
\end{aligned}
\end{equation*}
where we set $a = P(A = 1 \mid U = 1)$, $b = P(A = 1 \mid U=0)$, and $l(c) = P(U = 1 \mid C = c)$. All three quantities lie between 0 and 1. Fix some $c' > c$. $l(c)$ increases in $c$, so going from $\frac{a l(c)}{a l(c) + b(1 - l(c))}$ to $\frac{a l(c')}{a l(c') + b(1 - l(c'))}$ increasing the numerator by $a[l(c')-l(c)] \geq 0$, but the denominator by changes by $a[l(c')-l(c)]+b[l(c)-l(c')] \leq a[l(c')-l(c)]$. Of course, the entire quantity $\frac{a l(c)}{a l(c) + b(1 - l(c))}\leq 1$, so adding a number to the numerator and adding a smaller number to the denominator must increase its value. The argument for the $A = 0$ case is identical; just replace $a$  with $P(A = 0 \mid U = 1)$ and $b$ with $P(A = 0 \mid U=0)$.
\end{proof}
\subsection{Proof of Proposition \ref{prop: mlr implies prd}}
Before proceeding to the proof of Proposition \ref{prop: mlr implies prd}, we state two simple lemmas that we believe to be well-known.
\begin{lemma}
\label{lemma: mlr bidirection}
If Assumption \ref{assumption: mlr} holds for $C \mid U$, it also holds for $U \mid C$.
\end{lemma}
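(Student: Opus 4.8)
The plan is to observe that monotone likelihood ratio dependence is really a statement about the \emph{joint} density of $(C,U)$, and that this statement is manifestly symmetric under swapping the roles of the two variables. Concretely, I would begin by writing out what Assumption \ref{assumption: mlr} for $C \mid U$ asserts: for any $c' > c$ and $u' > u$,
\begin{equation*}
f(c' \mid u')\, f(c \mid u) \geq f(c \mid u')\, f(c' \mid u).
\end{equation*}

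Next I would substitute $f(c \mid u) = f(c,u)/f(u)$ into each of the four conditional densities. The marginal factors appearing on the two sides are both $f(u')f(u)$, so after clearing them (they are strictly positive on the relevant support) the inequality collapses to the symmetric condition
\begin{equation*}
f(c', u')\, f(c, u) \geq f(c, u')\, f(c', u),
\end{equation*}
which is exactly total positivity of order two (TP$_2$) of the joint density.

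Then I would run the identical algebra in reverse, this time dividing by the marginals of $C$ rather than $U$: writing $f(u \mid c) = f(u,c)/f(c)$ and using $f(u,c) = f(c,u)$, the displayed TP$_2$ inequality is seen to be equivalent to
\begin{equation*}
f(u' \mid c')\, f(u \mid c) \geq f(u \mid c')\, f(u' \mid c)
\end{equation*}
for $u' > u$, $c' > c$, which is precisely the statement that Assumption \ref{assumption: mlr} holds for $U \mid C$. The negative-dependence case is handled identically with every inequality reversed.

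I do not anticipate a genuine obstacle: the entire content is the cancellation of the marginal densities, which exposes the underlying symmetry of the TP$_2$ property. The only point requiring a word of care is positivity---one should restrict attention to the support on which the relevant marginals are nonzero (or adopt the convention that the inequality is vacuous when a conditioning density vanishes), so that multiplying and dividing by $f(u)$ and $f(c)$ is legitimate and preserves the direction of the inequality.
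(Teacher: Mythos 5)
Your argument is correct and is essentially identical to the paper's own proof: both pass through Bayes' rule to reduce the conditional MLR inequality to the symmetric TP$_2$ condition on the joint density $f(c,u)$, cancel the $U$-marginals, and then divide by the $C$-marginals to recover the MLR condition for $U \mid C$. Your added remark about restricting to the support where the marginals are positive matches the paper's parenthetical caveat, so there is nothing to change.
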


\begin{lemma}
\label{lemma: mlr implies prds}
If Assumption \ref{assumption: mlr} holds for $C \mid U$, then $P(C \geq c \mid U = u)$ is increasing in $u$.
\end{lemma}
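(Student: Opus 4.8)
The plan is to show that the monotone likelihood ratio property of $C \mid U$ forces the conditional distribution of $C$ given $U = u'$ to first-order stochastically dominate that given $U = u$ whenever $u' > u$; the survival-function statement $P(C \geq c \mid U = u') \geq P(C \geq c \mid U = u)$ is exactly this dominance. This is the classical implication that MLR dependence implies first-order stochastic dominance, and I would prove it directly from the defining inequality of Assumption \ref{assumption: mlr}, which, specialized to $C$ on $U$, reads $f(c' \mid u')\,f(c \mid u) \geq f(c \mid u')\,f(c' \mid u)$ whenever $c' > c$ and $u' > u$.

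First I would fix $u' > u$ and a threshold $c_0$, and rather than comparing the two survival functions directly, I would examine the cross-product difference $P(C \geq c_0 \mid U = u')\,P(C < c_0 \mid U = u) - P(C \geq c_0 \mid U = u)\,P(C < c_0 \mid U = u')$. Expressing each factor as an integral (or sum) of the relevant conditional density, this difference becomes the double integral $\int_{c \geq c_0} \int_{\tilde c < c_0} \bigl[ f(c \mid u')\,f(\tilde c \mid u) - f(c \mid u)\,f(\tilde c \mid u') \bigr]\, d\tilde c\, dc$.

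On the domain of integration one always has $c \geq c_0 > \tilde c$, hence $c > \tilde c$, while $u' > u$ by choice; applying the MLR inequality with the larger $C$-argument $c$ paired to the larger $U$-argument $u'$ shows the bracketed integrand is non-negative pointwise. Therefore the whole difference is $\geq 0$, yielding $P(C \geq c_0 \mid U = u')\,P(C < c_0 \mid U = u) \geq P(C \geq c_0 \mid U = u)\,P(C < c_0 \mid U = u')$. Writing $S(u) = P(C \geq c_0 \mid U = u)$ so that $P(C < c_0 \mid U = u) = 1 - S(u)$, this is $S(u')\bigl(1 - S(u)\bigr) \geq S(u)\bigl(1 - S(u')\bigr)$; the cross terms $S(u')S(u)$ cancel, leaving $S(u') \geq S(u)$, which is precisely the claim.

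The only delicate point is bookkeeping: I must keep the argument pairing in the MLR inequality correct (the larger value of $C$ goes with the larger value of $U$), and I phrase the comparison through the cross-product rather than a ratio so that no division by a possibly-zero probability is ever needed. This makes the argument cover the degenerate cases and handle the discrete, ordinal, and continuous settings uniformly, and I do not anticipate any substantive obstacle beyond it.
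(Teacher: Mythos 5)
Your proof is correct. The paper does not actually supply an argument for this lemma: it simply cites Lemma 3.4.2 of Lehmann and Romano, whose standard proof shows that MLR implies $E[\psi(C)\mid U=u]$ is non-decreasing in $u$ for every non-decreasing $\psi$, by splitting the integration domain into the region where $f(\cdot\mid u')/f(\cdot\mid u)<1$ and the region where it exceeds $1$. Your cross-product argument is a genuinely different and more self-contained route: you compare $S(u')\bigl(1-S(u)\bigr)$ with $S(u)\bigl(1-S(u')\bigr)$, expand the difference as a double integral of $f(c\mid u')f(\tilde c\mid u)-f(c\mid u)f(\tilde c\mid u')$ over $\{c\geq c_0>\tilde c\}$, and invoke the defining TP2 inequality pointwise with the larger $C$-argument paired to the larger $U$-argument. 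The pairing is applied correctly, the cancellation of the $S(u)S(u')$ cross terms is valid, and the deliberate avoidance of any ratio means no positivity or support conditions are needed, so the argument covers discrete, ordinal, and continuous $C$ uniformly. One cosmetic note: what you establish is $S(u')\geq S(u)$, i.e.\ weak monotonicity, which is exactly what the paper means by ``increasing'' here and is what is used downstream in the proof of Proposition \ref{prop: mlr implies prd}, so there is no gap. Your version has the advantage of making the appendix self-contained; the paper's citation has the advantage of brevity and of implicitly delivering the more general statement about all monotone test functions.
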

Now we present the proof of the proposition. 
\begin{proof}[Proof of Proposition \ref{prop: mlr implies prd}]
First, by Lemma \ref{lemma: mlr bidirection}, we know that $U \mid C$ satisfies the monotone likelihood ratio property under Assumption \ref{assumption: mlr}. By Lemma \ref{lemma: mlr implies prds}, this implies directly that $P(U \geq u \mid C = c)$ is increasing in $c$, i.e. Assumption \ref{assumption:prds}(i). We next demonstrate that the monotone likelihood ratio condition for $U \mid C$ continues to hold after conditioning on $A = a$ under Assumption \ref{assumption:nondifferential}. Fix some $u' \geq u, c' \geq c$ (all are in the supports of $U$ or $C$). Observe that 
\begin{equation*}
\begin{aligned}
&f(c' \mid u')f(c \mid u)  \geq f(c' \mid u)f(c \mid u') \implies  \\ & f(c' \mid u')f(c \mid u)f(a \mid u')f(a \mid u)  \geq f(c' \mid u)f(c \mid u') f(a \mid u')f(a \mid u) \implies \\ & f(c',a \mid u')f(c,a \mid u)  \geq f(c',a \mid u)f(c,a \mid u') \implies \\ & f(u' \mid a, c') f(a, c') / f(u') \times f(u \mid a, c) f(a, c) / f(u) \geq f(u \mid a, c') f(a, c') / f(u) \times f(u' \mid a, c) f(a, c) / f(u') \implies \\ & f(u' \mid a, c') f(a, c')  \times f(u \mid a, c) f(a, c) \geq f(u \mid a, c') f(a, c')  \times f(u' \mid a, c) f(a, c) \implies \\ & f(u' \mid a, c') \times f(u \mid a, c)  \geq f(u \mid a, c')  \times f(u' \mid a, c).
\end{aligned}
\end{equation*}
The second and fourth lines are by Bayes rule, and the third is by Assumption \ref{assumption:nondifferential}. The cancellations in the last two lines are valid because $u,u'$ and $c,c'$ are in the supports of $U$ and $C$ respectively, and by Assumption \ref{assumption: causal}(ii) combined with Assumption \ref{assumption:nondifferential}, which together guarantee $0 < P(A = a \mid C = c) < 1$ for all $c$. We have thus shown that under Assumption \ref{assumption:nondifferential}, monotone likelihood ratio dependence of $C \mid U$ (equivalently $U \mid C$) implies monotone likelihood ratio dependence of $U$ on $C$ given $A = a$. Another application of Lemma \ref{lemma: mlr implies prds} implies that $P(U \geq u \mid A = a, C = c)$ is non-decreasing in $c$ for $a = 0,1$ which is exactly Assumption \ref{assumption:prds}(ii).

\end{proof}

\subsection{Proofs of Lemmas \ref{lemma: mlr bidirection} and \ref{lemma: mlr implies prds}}
\begin{proof} [Proof of Lemma \ref{lemma: mlr bidirection}]
 This is a classical result but we provide a proof for completeness. Let $u' \ge u$ and $c'\ge c$ (all in the support of $U$ or $C$). The monotone likelihood ratio condition for $C\cond U$ tells us that $f(c' \mid u')f(c \mid u) \geq f(c \mid u')f(c' \mid u)$. By Bayes' rule, we can rewrite this inequality as 
 \begin{equation*}
\begin{aligned}
f(c', u')f(c, u) / [f(u') f(u)] \geq f(c, u')f(c', u) / [f(u') f(u)]
\end{aligned}
 \end{equation*}
 We can multiply both sides by $[f(u') f(u)] / [f(c') f(c)]$ to obtain 
  \begin{equation*}
\begin{aligned}
& f(c', u')f(c, u) / [f(c') f(c)] \geq f(c, u')f(c', u) / [f(c') f(c)] \implies \\ &f(u' \mid c')f(u \mid c)  \geq f(u' \mid c)f(u \mid c'),
\end{aligned}
 \end{equation*}
 as desired.
\end{proof}

\begin{proof}
    [Proof of Lemma \ref{lemma: mlr implies prds}]
    This lemma is well-known; see Lemma 3.4.2 on page 78 of \cite{Lehmann2022} for a proof.
    
\end{proof}

\subsection{Proof of Lemma \ref{lem: proxy_u_similarity}}
We introduce an auxiliary lemma that is required for the proof of Lemma \ref{lem: proxy_u_similarity}.

\begin{lemma}
\label{lemma: essential supremum mono}
Under Assumption \ref{assumption:prds}, the essential supremums of $U \mid C = c$ and $U \mid (A = a, C = c)$ are non-decreasing in $c$.
\end{lemma}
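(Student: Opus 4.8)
The plan is to reduce the whole statement to a single elementary monotonicity fact by characterizing the essential supremum entirely through the conditional survival function $u \mapsto P(U \geq u \mid C = c)$, which is precisely the object that Assumption \ref{assumption:prds}(i) controls. Writing $s(c) := \operatorname{ess\,sup}(U \mid C = c)$ for the regular conditional distribution of $U$ given $C = c$, the key identity I would establish is
\begin{equation*}
s(c) = \sup\{u \in \mathcal{U} : P(U \geq u \mid C = c) > 0\}.
\end{equation*}
The justification is that $P(U \geq u \mid C = c) = 0$ for every $u > s(c)$ (since $U \leq s(c)$ almost surely, conditionally), whereas $P(U \geq u \mid C = c) > 0$ for every $u < s(c)$ (otherwise $u$ would be a smaller almost-sure upper bound, contradicting that $s(c)$ is the essential supremum). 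At the endpoint $u = s(c)$ the survival probability may be positive (an atom) or zero, but this does not affect the supremum of the set, so the characterization holds without further case analysis.

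With this in hand, the monotonicity is immediate. First I would fix $c' > c$ and invoke Assumption \ref{assumption:prds}(i), which in the positive-sign case gives $P(U \geq u \mid C = c') \geq P(U \geq u \mid C = c)$ for all $u$. Consequently, whenever $P(U \geq u \mid C = c) > 0$ we also have $P(U \geq u \mid C = c') > 0$, so that
\begin{equation*}
\{u : P(U \geq u \mid C = c) > 0\} \subseteq \{u : P(U \geq u \mid C = c') > 0\}.
\end{equation*}
Taking the supremum of each side and applying the identity above yields $s(c) \leq s(c')$, which is exactly the asserted non-decreasingness of the essential supremum of $U \mid C = c$.

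The second claim, for $U \mid (A = a, C = c)$, follows by the verbatim argument with the survival function replaced by $P(U \geq u \mid A = a, C = c)$ and Assumption \ref{assumption:prds}(ii) used in place of (i), separately for $a = 0$ and $a = 1$. I do not expect a genuine obstacle: the substantive content is just that a pointwise non-decreasing family of conditional survival functions has non-decreasing support endpoints. The only place requiring care, and where I would spend most of the writing, is the rigorous justification of the $\sup$-characterization of the essential supremum together with the implicit appeal to regular conditional distributions, which is what makes the nesting-of-supports step legitimate.
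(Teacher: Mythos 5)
Your proof is correct and is essentially the paper's argument in contrapositive form: the paper assumes $\overline{u}(c') < \overline{u}(c)$ for some $c' > c$, picks $m'$ strictly between them, and derives $P(U \geq m' \mid C = c') = 0 < P(U \geq m' \mid C = c)$, contradicting Assumption \ref{assumption:prds}; your sup-characterization of the essential supremum plus the nesting of the sets $\{u : P(U \geq u \mid C = c) > 0\}$ encodes exactly the same fact. No gap.
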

\begin{proof}[Proof of Lemma \ref{lemma: essential supremum mono}]
The arguments for $U \mid C = c$ and $U \mid (A = a, C = c)$ are identical up to invoking part (i) or (ii) of Assumption \ref{assumption:prds}, so we only state the argument for $U \mid C = c$. The proof is by contradiction. Suppose there is a $c' > c$ such that the essential supremum for $U \mid C = c'$, call it $\overline{u}(c')$ is less than the essential supremum for $U \mid C = c$, call it $\overline{u}(c)$. The precise definitions are $\overline{u}(c') := \inf\{m \in \mathbb{R} : P(U > m \mid C = c') = 0\}$ and $\overline{u}(c) := \inf\{m \in \mathbb{R} : P(U > m \mid C = c) = 0\}$. If $\overline{u}(c') < \overline{u}(c')$, we can pick some number $m'$ such that $\overline{u}(c') < m' < \overline{u}(c')$. Moreover, by definition, $P(U \geq m' \mid C = c') = 0$ and $P(U \geq m' \mid C = c) > 0$. This contradicts Assumption \ref{assumption:prds}(i).
\end{proof}

\begin{proof}[Proof of Lemma \ref{lem: proxy_u_similarity}]
We present an argument similar to that of Theorem 2 from \cite{Ding2017}. Let $\overline{u}(c)$ and $\underline{u}(c)$ be the essential supremum and infimum of $U \mid C = c$, respectively. For both arguments, we use integral notation, which can be interpreted as a summation if $U$ is discrete. We also use derivative notation with respect to $U$, which can be interpreted as the difference between function values at two consecutive points in the support of $U$ if $U$ is discrete. For the first result, by the law of total probability and integration/summation by parts, 
\begin{equation*}
\begin{aligned}
P(A = 1 \mid C = c) &= \int P(A = 1 \mid U = u, C = c) P(U = u \mid C = c) du \\ &= \int P(A = 1 \mid  U = u) P(U = u \mid C = c) du  \\ &= \int P(A = 1 \mid  U = u) \frac{d}{du}P(U \leq u \mid C = c) du \\ &=  P(A = 1 \mid  U = u)P(U \leq u \mid C = c) |^{\overline{u}(c)}_{\underline{u}(c)}  \\ &- \int   \frac{d}{du} P(A = 1 \mid  U = u)P(U \leq u \mid C = c) du \\ &=  P(A = 1 \mid  U =  \overline{u}(c))  \\ &- \int   \frac{d}{du} P(A = 1 \mid  U = u)P(U \leq u \mid C = c) du 
\end{aligned}
\end{equation*}
 By Assumption \ref{assumption: mono}(ii), we know $\frac{d}{du} P(A = 1 \mid  U = u) \geq 0 $, and by assumption \ref{assumption:prds}, $P(U \leq u \mid C = c)$ is decreasing in $c$. We also know $\overline{u}(c)$ is non-decreasing in $c$ by Lemma \ref{lemma: essential supremum mono} which implies $P(A = 1 \mid  U =  \overline{u}(c))$ is as well. Thus, $P(A = 1 \mid  U =  \overline{u}) - \int   \frac{d}{du} P(A = 1 \mid  U = u)P(U \leq u \mid C = c) du $ is non-decreasing in $c$, and thus $P(A = 1 \mid C = c)$ is as well. 
 
 Let $\overline{u}(a,c)$ and $\underline{u}(a,c)$ be the essential supremum and infimum of $U \mid A = a, C = c$, respectively. For the second result, observe that 
\begin{equation*}
\begin{aligned}
E(Y \mid A = a, C= c) &= \int E(Y \mid A = a, U = u, C = c) P(U = u \mid A = a, C = c) du \\ &= \int E(Y \mid A = a, U = u) P(U = u \mid A = a, C = c) du \\ &= \int E(Y \mid A = a, U = u) \frac{d}{d u}P(U \leq u \mid A = a, C = c) du \\ &= E(Y \mid A = a, U = u)P(U \leq u \mid A = a, C = c) |^{\overline{u}(a,c)}_{\underline{u}(a, c)} \\ & - \int \frac{d}{d u} E(Y \mid A = a, U = u) P(U \leq u \mid A = a, C = c) du \\ &= E(Y \mid A = a,  U = \overline{u}(a,c)) \\ & - \int \frac{d}{d u} E(Y \mid A = a, U = u) P(U \leq u \mid A = a, C = c) du
\end{aligned}
\end{equation*}
By assumption \ref{assumption: mono}(i), we know $\frac{d}{du} E(Y \mid A = a,  U = u) \geq 0 $, and by assumption \ref{assumption:prds}, $P(U \leq u \mid A = a, C = c)$ is decreasing in $c$.  We also know $\overline{u}(a,c)$ is non-decreasing in $c$ by Lemma \ref{lemma: essential supremum mono} which implies $E(Y \mid A = a, U = \overline{u}(a,c))$ is as well. Thus, $E(Y \mid A = a, U = \overline{u}(a,c)) - \int \frac{d}{d u} E(Y \mid A = a, U = u) P(U \leq u \mid A = a, C = c) du$ is non-decreasing in $c$, and thus $E(Y \mid A = a, C = c)$ is as well. 
\end{proof}

\section{Proofs of results from Section 4}
\subsection{Proof of Lemma \ref{lemma: adjusted larger than true}}
\begin{proof}
The proof follows nearly identical steps as in Theorem 1 of \cite{VanderWeele2008}. The adjusted mean for $A = 1$ based on the proxy $C$ is 
\begin{equation*}
\begin{aligned}
     \int & E[Y \mid A = 1, C = c] f(c) dc = \int \int E[Y \mid A = 1, U = u, C = c] f(u \mid A = 1, c) f(c) du dc  \\ &= \int \int E[Y \mid A = 1, U = u, C = c] P(A = 1 \mid u, c)/f(A = 1 \mid c) f(u \mid c) f(c) du dc \\ &= \int E\{ E[Y \mid A = 1, U, C=c] P(A = 1 \mid U, C=c)/P(A = 1 \mid C) \mid C = c \} f(c)  dc \\ & \geq \int E\{ E[Y \mid A = 1, U] \mid C = c \} f(c)  dc \\ &= \int \int E[Y \mid A = 1, U =u, C = c \} f(u\mid c)  f(c) du  dc \\ &= \int \int E[Y \mid A = 1, U =u, C = c \} f(u, c) du  dc \\ &= \int \int E[Y \mid A = 1, U =u\} f(u, c) du  dc \\ &= \int  E[Y \mid A = 1, U =u\} f(u) du = E[Y(1)]  
\end{aligned}
\end{equation*}
The inequality is due to Lemma \ref{lem: covariance_inequality} combined with $E[Y \mid A = a, U = u, C=c] = E[Y \mid A = a, U = u]$ is non-decreasing in $u$ and $P[ A = 1 \mid U = u, C=c] = P[A = a \mid U = u]$ is non-decreasing in $u$ by Assumption \ref{assumption: mono} and \ref{assumption:nondifferential}. Thus, we have established $\int E[Y \mid A = 1, C = c] f(c) dc \geq E[Y(1)] $. Similarly, one can show that $\int E[Y \mid A = 0, C = c] f(c) dc \leq E[Y(0)] $, with the only difference being $P(A = 0 \mid U = u, C = c)$ is non-increasing in $u$. 
\end{proof}

\subsection{Proof of Lemma \ref{lemma: unadjusted larger than adjusted}}
\begin{proof}
The proof is nearly identical to the proof of Lemma \ref{lemma: adjusted larger than true}. 
\begin{equation*}
\begin{aligned}
     & E[Y \mid A = 1]  = \int E[Y \mid A = 1,  C = c] f(c \mid A = 1) dc  \\ &= \int E[Y \mid A = 1,C = c] P(A = 1 \mid c)/P(A = 1)  f(c) dc \\ &=  E\{ E[Y \mid A = 1, C] P(A = 1 \mid C)/P(A = 1)\} \\ & \geq E\{ E[Y \mid A = 1, C]\}  \\ &= \int  E[Y \mid A = 1, C =c\} f(c) dc.   
\end{aligned}
\end{equation*}
The inequality is due to Lemmas \ref{lem: covariance_inequality} and \ref{lem: proxy_u_similarity}. The analogous inequality for $A = 0$ is similarly derived.
\end{proof}

\section{Proofs of results from Section 5}
\subsection{Proof of Proposition \ref{prop: non-decreasing in u}}
\begin{proof}
 Let $C = \chi(U)$, where $\chi$ is a non-decreasing function. Also, let $\text{pre}(c) \equiv \{u \mid  \chi(u) = c\}$ be the pre-image of $c$. Then we can write
\begin{equation}
\label{eq: prd preimage}
\begin{aligned}
P(U \geq u \mid C = c) &= P(U \geq u \mid  \chi(U) = c) \\ &= P(U \geq u \mid U \in  \text{pre}(c)).
\end{aligned}
\end{equation}
Consider any $c' > c$. By a property of functions, preimages of disjoint sets are themselves disjoint. In addition, we assumed that $\chi(u)$ is non-decreasing in $u$. We then can deduce that every element of $\text{pre}(c')$ is greater than every element of $\text{pre}(c)$. This immediately implies that $P(U \geq u \mid U \in  \text{pre}(c))$ is non-decreasing in $c$. This proves part (i) of Assumption \ref{assumption:prds}. 

For part (ii), we can adapt the argument to work when conditioning on $A = 0, 1$. Note that the strict positivity from Assumption \ref{assumption: causal}(ii) along with Assumption \ref{assumption:nondifferential} implies that $P(A = a \mid C)$ is strictly bounded between zero and one, so the conditioning event $(A = a, C)$ is always well-defined, so we can perform the same decomposition of Equation \ref{eq: prd preimage} to get that $P(U \geq u \mid A = a, C = c) = P(U \geq u \mid A = a, U \in  \text{pre}(c))$. Then, due to the fact that the preimages remain disjoint $P(U \geq u \mid A = a, U \in  \text{pre}(c))$ is also non-decreasing in $c$ for $a = 0,1$. 

Next, we consider some $C^* = \chi^*(C)$ for some $C$ that satisfies Assumption \ref{assumption:prds}. We check part (i) for $C^*$ first. Define $\text{pre}^*(c^*) \equiv \{c \mid  \chi^*(c) = c^*\}$.
\begin{align*}
P(U \geq u \mid C^* = c^*) &= P(U \geq u \mid  \chi^*(C) = c^*) \\ &= P(U \geq u \mid C \in  \text{pre}^*(c^*)).
\end{align*}
Now fix any $\Tilde{c}^* > c^*$. By a property of functions, $\text{pre}^*(c^*)$ and $\text{pre}^*(\Tilde{c}^*)$ are disjoint, and all elements in the latter are greater than all elements in the former since $\chi^*$ is non-decreasing. Since $C$ satisfies Assumption \ref{assumption:prds}, $P(U \geq u \mid C = c)$ is non-decreasing in $c$. Thus, for any $\Tilde{b} \in \text{pre}^*(\Tilde{c}^*)$ and any $b \in \text{pre}^*(c^*)$, we have $P(U \geq u \mid C = \Tilde{b}) \geq P(U \geq u \mid C = b)$. Therefore, $P(U \geq u \mid C \in  \text{pre}^*(\Tilde{c}^*)) \geq P(U \geq u \mid C \in  \text{pre}^*(c^*))$ which is equivalent to the statement, $P(U \geq u \mid C^* = \Tilde{c}^*)) \geq P(U \geq u \mid C^* = c^*)$. The arguments for the $A = 0, 1$ cases are identical and thus omitted.
\end{proof}

\subsection{Proof of Proposition \ref{prop: log-concave additive noise}}
\begin{proof}
This is a well-known result. See Example 8.2.1 on page 369 of \cite{Lehmann2022} for a proof. Also, see \cite{Efron1965} for a related result.  
\end{proof}

\subsection{Proof of Proposition \ref{prop: nep}}
\begin{proof}
The statement of the proposition is well-known; see for instance Corollary 3.4.1 on page 74 of \cite{Lehmann2022}. We provide a proof for completeness. We can check the monotone likelihood ratio condition directly. Fix some $u' \geq u, c' \geq c$. Then for $h(c) \neq 0$,
\begin{equation*}
\begin{aligned}
\frac{f(c \mid u')}{f(c \mid u)} &= \frac{h(c)\exp(\eta(u') \times T(c) - A(u'))}{h(c)\exp(\eta(u) \times T(c) - A(u))} \\ &= \frac{\exp(\eta(u') \times T(c) - A(u'))}{\exp(\eta(u) \times T(c) - A(u))} \\ &\propto \frac{\exp(\eta(u') \times T(c))}{\exp(\eta(u) \times T(c))} \\ &= \exp(T(c) \times (\eta(u') - \eta(u))).
\end{aligned}
\end{equation*}
Since $\eta$ is increasing in $u$, $\eta(u') \geq \eta(u)$. $T$ is also increasing in $c$, so increasing $c$ must also increase the final quantity, thereby proving the claim. When $h(c) = 0$, both $f(c \mid u')$ and $f(c \mid u)$ are zero, so the inequality $f(c' \mid u') f(c \mid u) \geq f(c' \mid u) f(c \mid u')$ trivially holds.
\end{proof}

\section{Detailed justification for the examples from Section 5}
\subsection{Normal-Normal}

We will check the monotone likelihood ratio condition directly. Note that $C \mid U = u \sim N(\rho u, 1-\rho^2)$. Then for $u' \geq u$, $c' \geq c$, \begin{align*}
    f(c'\mid u') f(c\mid u) = \frac{1}{2\pi(1-\rho^2)} \exp(-0.5(c' - \rho u')^2/(1-\rho^2)) \exp(-0.5(c - \rho u)^2/(1-\rho^2)) \\ f(c\mid u') f(c'\mid u) = \frac{1}{2\pi(1-\rho^2)} \exp(-0.5(c - \rho u')^2/(1-\rho^2)) \exp(-0.5(c' - \rho u)^2/(1-\rho^2))
\end{align*}
One can verify that the top quantity is larger than the bottom quantity by comparing the exponents, i.e. $-0.5(c' - \rho u')^2/(1-\rho^2) + -0.5(c - \rho u)^2/(1-\rho^2)$ to $-0.5(c' - \rho u)^2/(1-\rho^2) + -0.5(c' - \rho u)^2/(1-\rho^2)$. Subtracting the latter from the former yields $2\rho(u'-u)(c'-c)/(1-\rho^2) \geq 0$.

If $C$ had actually been generated from the model $C = \rho U + \sqrt{1- \rho^2}\epsilon$, where $U \sim N(0,1)$ and $\epsilon \sim N(0,1)$ are independent, then this would match the conditional distribution of a bivariate standard gaussian with correlation $\rho$. It would also satisfy the log-concave additive noise model, since we could take $\sqrt{1- \rho^2}\epsilon$ to be the independent noise, $g(u) = \rho u$ and $h$ the identity function. Gaussian densities are log-concave.

\subsection{Binary-Binary}
We first check Assumption \ref{assumption:prds} directly. Let $p_{ab}=P(U=a,C=b)$, for $a,b \in \{0,1\}$. Note that $P(U \leq 0 \mid C = 1) = P(U = 0 \mid C = 1) = \frac{p_{01}}{p_{11}+p_{01}}$ and $P(U \leq 0 \mid C = 0) = P(U = 0 \mid C = 0) = \frac{p_{00}}{p_{10}+p_{00}}$. Assumption \ref{assumption:prds}(i) would hold if $\frac{p_{00}}{p_{10}+p_{00}} \geq \frac{p_{01}}{p_{11}+p_{01}}$. The covariance between $U$ and $C$ is given by 
\begin{equation*}
\begin{aligned}
p_{11}-(p_{01}+p_{11})(p_{10}+p_{11})&= p_{11}-(p_{01}p_{10}+p_{01}p_{11}+p_{11}p_{10}+p_{11}^2)\\ &= p_{11}(1-p_{01}-p_{10}-p_{11})-p_{01}p_{10} \\ &=p_{11}p_{00}-p_{01}p_{10}.
\end{aligned}
\end{equation*}
If the covariance is nonnegative, then 
\begin{equation*}
    p_{11}p_{00}-p_{01}p_{10} \geq 0 \implies p_{11}p_{00} + p_{00}p_{01} \geq p_{01}p_{10} + p_{00}p_{01} \implies \frac{p_{00}}{p_{10}+p_{00}} \geq \frac{p_{01}}{p_{11}+p_{01}}.
\end{equation*}
So Assumption \ref{assumption:prds}(i) holds if $U$ and $C$ are not negatively correlated.
Lemma \ref{lemma: u binary prds} implies (ii) and (iii) also hold if Assumption \ref{assumption:nondifferential} holds.

Checking Assumption \ref{assumption: mlr}
amounts to checking $P(U = 1 \mid C = 1) / P(U = 1 \mid C = 0) \geq P(U = 0 \mid C = 1) / P(U = 0 \mid C = 0)$. This is equivalent to $p_{11} / p_{10} \geq p_{01} / p_{00}$, which is equivalent to $ p_{11}p_{00}-p_{01}p_{10} \geq 0$, i.e. that $U$ and $C$ are positively correlated.

\subsection{Binary Regression}
 Probit and logistic regression are special cases of the exponential family. This is because we can write
 \begin{align*}
    f(c \mid u) = \exp(c \log(p(u) / (1-p(u)) + \log(1-p(u))),
 \end{align*}
 where $p(u) = \Phi(\tau_0 + \tau_1 u)$ for probit and $p(u) = \text{expit}(\tau_0 + \tau_1 u)$ for logistic. These both fit the exponential family case by taking $A(u) = \log(1-p(u))$, $T(c) = c$ which is strictly increasing in $c$, $h(c) = 1$, and $\eta(u) = \log(p(u) / (1-p(u))$ which is non-decreasing in $u$ since it is a composition of non-decreasing functions.

Also, if $C$ were generated from the latent variable interpretation of probit and $U$ is from a log-concave distribution, we can apply Propositions \ref{prop: non-decreasing in u} and \ref{prop: log-concave additive noise} jointly. Namely, if we take $C^* = \tau_0 + \tau_1 U + \epsilon$, where $\epsilon \sim N(0,1)$ and $C = \mathbbm{1}(C^* > 0)$. So take $\chi(\cdot)$ to be the indicator that the argument is greater than $0$. Then $C$ would a non-decreasing function of $C^*$ which is the sum of $U$ and noise, both of which are from log-concave distributions. Moreover, the latent variable interpretation of probit regression can be extended to the ordinal case, where $C$ is a binned version of $C^*$.

\subsection{Additive Noise Differential Privacy}
The justification is immediate from the fact that the Laplace and Gaussian densities are log-concave.

\subsection{Dichotomized/Coarsened Variable}
This is a special case of the $C$ non-decreasing in $U$ model, by taking $\chi$ to be the coarsening/dichotomizing function that maps $C$ to $\{1,\ldots,K\}$ based on some fixed thresholds $\mu_k$.

\section{Proofs of results from Section 6}
\subsection{Proof of Proposition \ref{prop: tapered implies prds}}
\begin{proof}
    For $i = 1$ and $i = K$, that $P(C\ge i \mid U=j)$ is non-decreasing in $j$ is immediate; for $i = 1$ the quantity is always 1 and for $i = K$, $P(C \geq K \mid U = j) = P(C = K \mid U = j)$ and so the tapered condition implies this quantity is non-decreasing in $j$. Thus, we restrict our attention to $2\le i \le K-1$.
    
    Fixing $i$, we note that Assumption \ref{assumption: tapered} implies that
    $$
         p_{i'j} \le p_{i'(j+1)},
    $$
    for each $i' \ge i$ and $j\le i-1$. Taking the sum over $i'$, we see that $P(C\ge i\mid U=j)$ is increasing from $j=1$ to $i$. \newline
    \indent
    To complete the proof, we must show the same holds when $j$ increases from $i$ to $K$. We similarly invoke Assumption \ref{assumption: tapered} to state that
    $$
        p_{i'j} \ge p_{i'(j+1)},
    $$
    when $K-1 \ge j \ge i$ and $i' < i$. Again, for each such $j$, we can sum over all such $i'$:
    $$
        \sum_{i' < i}p_{i'j} \ge \sum_{i' < i}p_{i'(j+1)} \implies P(C < i \mid U = j) \ge P(C < i \mid U = j+1).
    $$
    Taking the complimentary statement of the right-hand side above, we see that $P(C \ge i \mid U = j) \le  P(C \ge i \mid U = j+1)$ for $j \ge i$ as well. We conclude that $P(C \ge i \mid U = j)$ must be nondecreasing in $j$. For the proof, notice we have only used the ``horizontal" part of the tapered condition, which states that for $j \leq  k \leq i$, $p_{ij} \leq p_{ik}$ and when $i \leq  j \leq k$, and $p_{ij} \geq p_{ik}$.
\end{proof}

\section{Counterexamples from \cite{Gabriel2022}}
\begin{enumerate}
\item We first consider settings 4 and 5 from \cite{Gabriel2022}, with notation adapted to our setting. They set the outcome model to be $E(Y \mid A = a, U = u) = \alpha + \beta a + \gamma u + \delta a u$. In some counterexamples for setting 4, they set $\gamma = 1$ and $\delta = -2.2$. This results in $E(Y \mid A = 1, U = u) = \alpha + \beta  - 1.2 u$ and $E(Y \mid A = 0, U = u) = \alpha + u$, so the outcome models are monotone in $u$ but in different directions, violating Assumption \ref{assumption: mono}(i). Similarly, in setting 5,  they set $\gamma = 1$ and $\delta = -1.8$. One can check that this also leads to  outcome models that are monotone in $u$ but in different directions.

\item We next consider setting 6 of \cite{Gabriel2022}. In this setting, They posit 
\begin{equation*}
\begin{aligned}
& U \mid A = 0 \sim 
\text{Gamma}(1.1, 2) \, U \mid A = 1 \sim 
\text{Gamma}(3, 0.8), \\ & E(Y \mid A = a, U = u) = \alpha + \beta a + \gamma u.
\end{aligned}
\end{equation*}
The gamma parameters are the shape and scale parameters. It is easy to see that Assumption \ref{assumption: mono}(i) is not violated here. This suggests that \ref{assumption: mono}(ii) must be violated. It is possible to obtain a closed form for $P(A = 1 \mid U = u)$ using Bayes' rule. We omit the analytical formula and instead plot it over a range of $u$ values in the left panel of Figure \ref{fig:gabriel counter}. We see that, indeed, the propensity score is not monotone in $u$. 
\item Finally we examine setting 7 from \cite{Gabriel2022}. In this setting, the authors consider an additional covariate variable, which we will call $X$. We did not explicitly write our results with $X$, but claimed a slight tweak to the assumptions would suffice. Explicitly, we would require $E(Y \mid A = a, U = u, X = x)$ to be non-decreasing in $u$ for $a = 0, 1$ and all $x$ as well as $P(A = 1 \mid U = u, X = x)$ to be non-decreasing in $u$ for all $x$. Their setting was as follows: 
\begin{equation*}
\begin{aligned}
& X \sim \text{Bernoulli}(d), \  A \mid X = x\sim \text{Bernoulli}(p_x), \\ & U \mid A = a, X = x \sim 
N(\mu_{ax}, 1),
 \\ & E(Y \mid A = a, U = u, X = x) = \alpha + \beta a + \gamma u + \rho x.
\end{aligned}
\end{equation*}
It is easy to see that the outcome monotonicity will hold for $a = 0,1$ and all $x$ as there are no interactions between $u$ and $a$ or $x$ in the linear outcome model. We now closely examine the propensity model. As in setting 6, we can get a closed form for $P(A = 1 \mid U = u, X = x)$ using Bayes' rule.  We omit the analytical formulas and instead plot $P(A = 1 \mid U = u, X = x)$ over a range of $u$ values for $x = 0,1$ in the right panel of Figure \ref{fig:gabriel counter}. We see that the propensity scores are monotone in $u$ for both values of $x$, but in different directions.

\begin{figure}[htbp]
    \centering
    \includegraphics[width=0.35\textwidth]{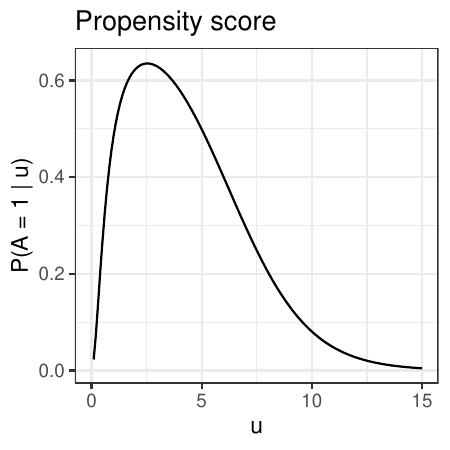}
    \hfill
    \includegraphics[width=0.47\textwidth]{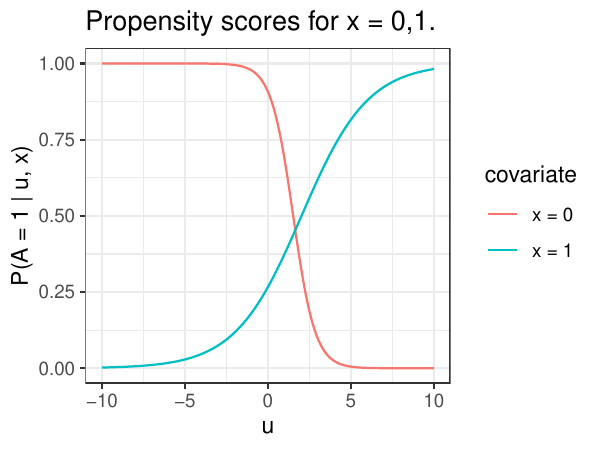}
    \caption{Left: A plot of the propensity score in setting 6 of \cite{Gabriel2022}. Right: A plot of the propensity score in setting 7 of \cite{Gabriel2022} colored by covariate.}
    \label{fig:gabriel counter}
\end{figure}

\end{enumerate}

\section{Examples and counterexamples from Section 6.3}
For these set of examples, we consider the case where $U$ and $C$ have 3 levels, and WLOG, we label these $1,2,3$. The $i,j$ entry of a matrix $P_{A \mid B}$ is simply $P(A = i \mid B = j)$, for $i,j \in \{1,2,3\}$.
\begin{example}[$\text{PRD } C \mid U \nRightarrow \text{PRD } U \mid C$]
Consider the following matrix:
\begin{equation*}
   P_{C \mid U} = \begin{bmatrix}
0.4 & 0.3 & 0.30\\
0.5 & 0.5 & 0.25 \\
0.1 & 0.2 & 0.45 \\
\end{bmatrix}.
\end{equation*}
Suppose the marginal distribution of $U$ is given by the probability vector $[0.2,0.5,0.3]^T$, where the $j$th entry equals $P(U = j)$. One can check that 
\begin{equation*}
   P_{U \mid C} = \begin{bmatrix}
0.25000 & 0.2352941 & 0.07843137 \\
0.46875 & 0.5882353 & 0.39215686 \\
0.28125 & 0.1764706 & 0.52941176
\end{bmatrix}.
\end{equation*}
This distribution does not follow PRD, since $P(U \geq 3 \mid U = 2)  = 0.1764706 < P(U \geq 3 \mid U = 1) = 0.28125$.

\end{example}
\begin{example}[$\text{Tapered } C \mid U \nRightarrow \text{Tapered } U \mid C$]
Consider the following matrix:
\begin{equation*}
   P_{C \mid U} = \begin{bmatrix}
0.39 & 0.32 &  0.3\\
0.31 & 0.37 & 0.31 \\
0.3 & 0.31 & 0.39 \\

\end{bmatrix}.
\end{equation*}
Suppose the marginal distribution of $U$ is given by the probability vector $[0.25,0.4,0.35]^T$, where the $j$th entry equals $P(U = j)$. One can check that 
\begin{equation*}
   P_{U \mid C} = \begin{bmatrix}
0.2950076 & 0.2320359 & 0.2235469 \\
0.3872920 & 0.4431138 & 0.3695976 \\
0.3177005 & 0.3248503 & 0.4068554
\end{bmatrix}.
\end{equation*}
This matrix is not tapered since $P(U = 2 \mid C = 1) = 0.3872920 > P(U = 1 \mid C = 1) = 0.2950076$.
\end{example}

\begin{example}[$C \mid U$ PRD, Tapered, and MLR]
Consider the following matrix:
\begin{equation*}
   P_{C \mid U} = \begin{bmatrix}
0.7 & 0.25 & 0.05 \\
0.25 & 0.5 & 0.25  \\
0.05 & 0.25 & 0.7
\end{bmatrix}.
\end{equation*}

This matrix is tapered by inspection. The MLR can be checked by inspecting three sequences: 1) $\frac{P(C = 1 \mid U = 2)}{P(C = 1 \mid U = 1)} = \frac{0.25}{0.7} \leq \frac{P(C = 2 \mid U = 2)}{P(C = 2 \mid U = 1)} = \frac{0.5}{0.25} \leq \frac{P(C = 3 \mid U = 2)}{P(C = 3 \mid U = 1)} = \frac{0.25}{0.05}$, 2) $\frac{P(C = 1 \mid U = 3)}{P(C = 1 \mid U = 1)} = \frac{0.05}{0.7} \leq \frac{P(C = 2 \mid U = 3)}{P(C = 2 \mid U = 1)} = \frac{0.25}{0.25} \leq \frac{P(C = 3 \mid U = 3)}{P(C = 3 \mid U = 1)} = \frac{0.7}{0.05}$, 3) $\frac{P(C = 1 \mid U = 3)}{P(C = 1 \mid U = 2)} = \frac{0.05}{0.25} \leq \frac{P(C = 2 \mid U = 3)}{P(C = 2 \mid U = 2)} = \frac{0.25}{0.5} \leq \frac{P(C = 3 \mid U = 3)}{P(C = 3 \mid U = 2)} = \frac{0.7}{0.25}$.
\end{example}
\begin{example}[$C \mid U$ PRD, not Tapered, and not MLR]
\begin{equation*}
   P_{C \mid U} = \begin{bmatrix}
1/4 & 1/6 & 1/7\\
1/4 & 1/3 & 1/7  \\
1/2 & 1/2 & 5/7
\end{bmatrix}.
\end{equation*}
This is PRD since $P(C \geq 2 \mid U = 1) = 1/4 + 1/2 \leq P(C \geq 2 \mid U = 2) = 1/3 + 1/2 \leq P(C \geq 2 \mid U = 3) = 1/7 + 5/7$ and $P(C \geq 3 \mid U = 1) = 1/2 \leq P(C \geq 3 \mid U = 2) = 1/2 \leq P(C \geq 3 \mid U = 3) = 6/7$. This matrix is not MLR since $P(C = 2 \mid U = 3) / P(C = 2 \mid U = 2)  = \frac{1/7}{1/3}< P(C = 1 \mid U = 3) / P(C = 1 \mid U = 2) = \frac{1/7}{1/6}$. This matrix is not tapered since $P(C = 3 \mid U = 1) = 1/2 > P(C = 1 \mid U = 1) = 1/4$.
\end{example}

\begin{example}[$C \mid U$ PRD and Tapered, not MLR]
Consider the following matrix:
\begin{equation*}
   P_{C \mid U} = \begin{bmatrix}
0.5 & 0.25 &  0.25 \\
0.25 & 0.5 & 0.25  \\
0.25 & 0.25 & 0.5
\end{bmatrix}.
\end{equation*}
It is easy to see that this matrix is tapered. This matrix is not MLR since $P(C = 2 \mid U = 3) / P(C = 2 \mid U = 2)  = \frac{0.25}{0.5}< P(C = 1 \mid U = 3) / P(C = 1 \mid U = 2) = \frac{0.25}{0.25}$.
    
\end{example}

\begin{example}[$C \mid U$ PRD and MLR, not Tapered]
Consider the following matrix:
\begin{equation*}
   P_{C \mid U} = \begin{bmatrix}
1/6 & 1/6 & 1/6 \\
1/3 & 1/3 & 1/3  \\
1/2 & 1/2 & 1/2
\end{bmatrix}.
\end{equation*}

This matrix is not tapered since $P(C = 3 \mid U = 1) = 1/2 > P(C = 1 \mid U = 1) = 1/6$. $C$ is independent of $U$, so it trivially follows that the matrix is MLR and PRD.

A more interesting example is the following:
\begin{equation*}
   P_{C \mid U} = \begin{bmatrix}
1/3 & 0 & 0 \\
1/6 & 1/4 & 1/4  \\
1/2 & 3/4 & 3/4
\end{bmatrix}.
\end{equation*}
This matrix is not tapered since $P(C = 3 \mid U = 1) = 1/2 > P(C = 1 \mid U = 1) = 1/3$. The MLR can be checked by inspecting three sequences: 1) $\frac{P(C = 1 \mid U = 2)}{P(C = 1 \mid U = 1)} = \frac{0}{1/3} \leq \frac{P(C = 2 \mid U = 2)}{P(C = 2 \mid U = 1)} = \frac{1/4}{1/6} \leq \frac{P(C = 3 \mid U = 2)}{P(C = 3 \mid U = 1)} = \frac{3/4}{1/2}$, 2) $\frac{P(C = 1 \mid U = 3)}{P(C = 1 \mid U = 1)} = \frac{0}{1/3} \leq \frac{P(C = 2 \mid U = 3)}{P(C = 2 \mid U = 1)} = \frac{1/4}{1/6} \leq \frac{P(C = 3 \mid U = 3)}{P(C = 3 \mid U = 1)} = \frac{3/4}{1/2}$, 3) $\frac{P(C = 1 \mid U = 3)}{P(C = 1 \mid U = 2)} = \frac{0}{0} := 1 \leq \frac{P(C = 2 \mid U = 3)}{P(C = 2 \mid U = 2)} = \frac{1/4}{1/4} \leq \frac{P(C = 3 \mid U = 3)}{P(C = 3 \mid U = 2)} = \frac{3/4}{3/4}$.
\end{example}

\section{Effect of Treatment on the Treated}
As discussed in the main manuscript, the attenuation continues to hold for the effect of the treatment on the treated, i.e. $E[Y(1) - Y(0) \mid A = 1]$. The unadjusted, adjusted, and true version of $E[Y(1) \mid A = 1]$, coincide, so we need only compare the corresponding versions of $E[Y(0) \mid A = 1]$. The unadjusted version is $E[Y \mid A = 0]$ since we remain in the setting with no measured confounders. The adjusted version is $\int E[Y \mid A = 0, C = c] f(c \mid A = 1)$. The true $E[Y(0) \mid A = 1]$ can be expressed as $\int E[Y \mid A = 0, U = u] f(u \mid A = 1)$. We now present two analogous lemmas to \ref{lemma: adjusted larger than true} and \ref{lemma: unadjusted larger than adjusted}. The arguments we present below would need to be slightly altered with measured confounders, and the steps are slightly more involved than incorporating measured confounders for the average treatment effect case; see Theorem 2 of \cite{Chiba2009}.
\begin{lemma}
Under assumptions \ref{assumption: causal}, the positive sign versions of \ref{assumption: mono},  and \ref{assumption:nondifferential}, 
\begin{equation*}
\int E[Y \mid A = 0, C = c] f(c \mid A = 1) dc \leq \int E[Y \mid A = 0, U = u] f(u \mid A = 1) du
\end{equation*}
\end{lemma}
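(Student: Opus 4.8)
The plan is to mirror the proof of Lemma \ref{lemma: adjusted larger than true}, but to carry the conditioning on $A = 1$ through \emph{both} averages and to reduce the desired inequality to a single covariance inequality within each stratum of $C$. Write $g(u) := E(Y \mid A = 0, U = u)$, which is non-decreasing in $u$ by the positive-sign version of Assumption \ref{assumption: mono}(i). The first step is to rewrite both sides as expectations over the treated distribution of $C$. For the left-hand side, the law of total probability together with Assumption \ref{assumption:nondifferential} (which gives $E(Y \mid A = 0, U = u, C = c) = E(Y \mid A = 0, U = u) = g(u)$) yields $E(Y \mid A = 0, C = c) = E\{g(U) \mid A = 0, C = c\}$, so the left-hand side equals $E_{C \mid A = 1}[\,E\{g(U) \mid A = 0, C\}\,]$. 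For the right-hand side, the tower property applied within the treated subpopulation gives $\int g(u) f(u \mid A = 1)\,du = E_{C \mid A = 1}[\,E\{g(U) \mid A = 1, C\}\,]$. Subtracting, the claim reduces to showing that for $f(c \mid A = 1)$-almost every $c$,
\[
E\{g(U) \mid A = 1, C = c\} \geq E\{g(U) \mid A = 0, C = c\}.
\]

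The second step is to establish this stratum-wise inequality via a covariance argument in the spirit of Lemma \ref{lem: covariance_inequality}. Abbreviating $E_c[\cdot] := E[\cdot \mid C = c]$ and $\pi(u) := P(A = 1 \mid U = u)$, Bayes' rule within the stratum $C = c$ combined with Assumption \ref{assumption:nondifferential} (so that $P(A = a \mid U = u, C = c) = P(A = a \mid U = u)$) gives $E\{g(U) \mid A = 1, C = c\} = E_c[g(U)\pi(U)]\,/\,E_c[\pi(U)]$ and $E\{g(U) \mid A = 0, C = c\} = E_c[g(U)(1 - \pi(U))]\,/\,E_c[1 - \pi(U)]$, where all denominators are strictly positive by Assumption \ref{assumption: causal}(ii). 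Cross-multiplying the two positive denominators, the desired inequality is equivalent to $E_c[g(U)\pi(U)] - E_c[g(U)]\,E_c[\pi(U)] = \operatorname{cov}_c\{g(U), \pi(U)\} \geq 0$. Since $g$ is non-decreasing by Assumption \ref{assumption: mono}(i) and $\pi$ is non-decreasing by Assumption \ref{assumption: mono}(ii), this covariance of two non-decreasing functions of the single random variable $U \mid C = c$ is nonnegative by the $K = 1$ case of Lemma \ref{lem: covariance_inequality}. Integrating the stratum-wise inequality against $f(c \mid A = 1) \geq 0$ then completes the proof.

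The only genuinely delicate step is the first one: keeping straight that the inner conditional law of $U$ differs between the two sides (the untreated law $U \mid A = 0, C = c$ on the left versus the treated law $U \mid A = 1, C = c$ obtained from the tower property on the right), while the outer averaging measure $f(c \mid A = 1)$ is common to both. Once this bookkeeping is in place, the reduction to $\operatorname{cov}_c\{g(U), \pi(U)\} \geq 0$ is immediate, and, unlike the regression-dependence route used elsewhere, this argument needs only the monotonicity and nondifferential mismeasurement assumptions, exactly matching the stated hypotheses of the lemma.
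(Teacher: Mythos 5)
Your proof is correct and follows essentially the same route as the paper's: both arguments stratify on $C$ under the treated law $f(c \mid A=1)$, use nondifferential mismeasurement and Bayes' rule to express $E\{g(U)\mid A=a, C=c\}$ as a $\pi(U)$- or $(1-\pi(U))$-weighted average of $g(U)$ under $U \mid C=c$, and finish with the covariance inequality for monotone functions of a single random variable. The only difference is cosmetic: the paper passes through the intermediate quantity $E\{g(U)\mid C=c\}$ and invokes the covariance inequality twice (once for the pair $g, 1-\pi$ and once for $g, \pi$), whereas your cross-multiplication collapses both steps into the single equivalent condition $\operatorname{cov}_c\{g(U),\pi(U)\}\ge 0$.
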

\begin{proof}
This proof is nearly identical to that of Theorem 2 from \cite{Chiba2009}. 
\begin{align*}
\int &E[Y \mid A = 0, c] f(c \mid A = 1) dc = \int \int  E[Y \mid A = 0, c, u] f(u \mid A = 0, c) du f(c \mid A = 1) dc \\ &= \int \int  E[Y \mid A = 0, c, u] \frac{P(A = 0 \mid u, c)f(u \mid c)}{P(A = 0 \mid c)} du f(c \mid A = 1) dc \\ &= \int E_{U \mid c}[E[Y \mid A = 0, c, U] P(A = 0 \mid U, c)] f(c \mid A = 1) / P(A = 0 \mid c) dc \\ &\leq \int E_{U \mid c}[E[Y \mid A = 0, c, U]]E_{U \mid c}[ P(A = 0 \mid U, c)] f(c \mid A = 1) / P(A = 0 \mid c) dc \\ &= \int E_{U \mid c}[E[Y \mid A = 0, c, U]] f(c \mid A = 1)  dc \\ &= \int E_{U \mid c}[E[Y \mid A = 0, c, U]] E_{U \mid c}[ P(A = 1 \mid U, c)] f(c \mid A = 1) / P(A = 1 \mid c) dc \\ &\leq \int E_{U \mid c}[E[Y \mid A = 0, c, U] P(A = 1 \mid U, c)] f(c \mid A = 1) / P(A = 1 \mid c) dc \\ &= \int \int E[Y \mid A = 0, c, u] P(A = 1 \mid u, c)] f(u \mid c) f(c \mid A = 1) / P(A = 1 \mid c) dc \\ &= \int \int E[Y \mid A = 0, c, u] P(A = 1 \mid u, c) f(u \mid c) f(c) / P(A = 1) du dc \\ &= \int \int E[Y \mid A = 0, u] P(A = 1 \mid u) f(u,c) / P(A = 1) dc du \\ &= \int \int E[Y \mid A = 0, u] P(A = 1 \mid u) f(c \mid u) f(u) / P(A = 1) dc du \\ &= \int  E[Y \mid A = 0, u] P(A = 1 \mid u) f(u) / P(A = 1) du \\ &= \int  E[Y \mid A = 0, u] f(u \mid A = 1) du. 
\end{align*}
\end{proof}
The first inequality is due to $E[Y \mid A = 0, C = c, U = u] = E[Y \mid A = 0, U = u]$ being non-decreasing in $u$ and $P(A = 0 \mid C = c, U = u) = P(A = 0 \mid U = u)$ being non-increasing in $c$ and Lemma \ref{lem: covariance_inequality}. The second inequality is also due to Lemma \ref{lem: covariance_inequality} as well as $E[Y \mid A = 0, C = c, U = u] = E[Y \mid A = 0, U = u]$ and $P(A = 0 \mid C = c, U = u) = P(A = 0 \mid U = u)$ both being non-decreasing in $u$. The fourth from the last equality is by Assumption \ref{assumption:nondifferential}.

\begin{lemma}
Under the positive sign versions of assumptions \ref{assumption: mono}, \ref{assumption:prds}, and \ref{assumption:nondifferential},
\begin{equation*}
E[Y \mid A = 0] \leq \int E[Y \mid A = 0, C = c] f(c \mid A = 1) dc
\end{equation*}
\end{lemma}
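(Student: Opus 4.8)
The plan is to recognize both sides of the inequality as expectations of one common monotone function of $C$ taken under two different conditional laws of $C$, and to compare them using the association inequality of Lemma \ref{lem: covariance_inequality}. Write $g(c) = E[Y \mid A = 0, C = c]$. Then the left-hand side is $E[Y \mid A = 0] = \int g(c)\, f(c \mid A = 0)\, dc = E[g(C) \mid A = 0]$, while the right-hand side is exactly $\int g(c)\, f(c \mid A = 1)\, dc = E[g(C) \mid A = 1]$. Thus the claim reduces to showing $E[g(C) \mid A = 0] \leq E[g(C) \mid A = 1]$. The key structural facts, both supplied by Lemma \ref{lem: proxy_u_similarity}, are that $g$ is non-decreasing in $c$ and that the propensity $P(A = 1 \mid C = c)$ is non-decreasing in $c$.

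I would bridge the two conditional expectations through the marginal law $f(c)$. First, Lemma \ref{lemma: unadjusted larger than adjusted} already gives $E[Y \mid A = 0] \leq \int g(c)\, f(c)\, dc = E[g(C)]$, so it remains to prove $E[g(C)] \leq E[g(C) \mid A = 1]$. Using Bayes' rule I would write $E[g(C) \mid A = 1] = E\{g(C)\, P(A = 1 \mid C)\}/P(A = 1)$, and then apply Lemma \ref{lem: covariance_inequality} with the single argument $C$ (the case $K = 1$): since $g(c)$ and $P(A = 1 \mid C = c)$ are both non-decreasing, $\operatorname{cov}\{g(C), P(A = 1 \mid C)\} \geq 0$, i.e. $E\{g(C)\, P(A = 1 \mid C)\} \geq E[g(C)]\, P(A = 1)$. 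Dividing by $P(A = 1)$, which is strictly positive by Assumption \ref{assumption: causal}(ii) together with Assumption \ref{assumption:nondifferential}, yields $E[g(C) \mid A = 1] \geq E[g(C)]$, and chaining the two bounds completes the argument. Equivalently, one can sidestep citing the earlier lemma and compute the difference directly, obtaining $E[g(C) \mid A = 1] - E[g(C) \mid A = 0] = \operatorname{cov}\{g(C), P(A = 1 \mid C)\}/[P(A=1)P(A=0)] \geq 0$, which makes the single covariance inequality do all the work.

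The main subtlety, rather than any computational difficulty, is keeping the two conditioning events straight: the outcome regression $g$ is evaluated at $A = 0$, whereas the reweighting distribution is the law of $C$ given $A = 1$. The conceptual crux is that the direction of monotonicity of $g$ and the direction of the stochastic shift induced by the non-decreasing propensity $P(A = 1 \mid C = c)$ are aligned, so that moving the weighting from $f(c \mid A = 0)$ upward through $f(c)$ to $f(c \mid A = 1)$ can only increase the integral. Once this alignment is identified, everything reduces to routine Bayes-rule bookkeeping and the same association inequality already invoked for Lemma \ref{lemma: unadjusted larger than adjusted}.
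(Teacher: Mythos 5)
Your proof is correct and follows essentially the same route as the paper's: both pass through the marginal $\int E[Y\mid A=0,C=c]f(c)\,dc$ as an intermediate quantity and apply the covariance inequality of Lemma \ref{lem: covariance_inequality} to the non-decreasing pair $E[Y\mid A=0,C=c]$ and $P(A=1\mid C=c)$ supplied by Lemma \ref{lem: proxy_u_similarity}. The only cosmetic difference is that you reuse Lemma \ref{lemma: unadjusted larger than adjusted} for the first half (the paper re-derives that step inline via the same covariance argument), and your closed-form identity $E[g(C)\mid A=1]-E[g(C)\mid A=0]=\operatorname{cov}\{g(C),P(A=1\mid C)\}/[P(A=1)P(A=0)]$ is a valid, slightly tidier packaging of the same two inequalities.
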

\begin{proof}
\begin{align*}
E[Y \mid A = 0] &= \int E[Y \mid C = c, A = 0] f(C = c \mid A = 0) dc \\ &= \int E[Y \mid C = c, A = 0] \frac{P(A = 0 \mid C = c)}{P(A = 0)} f(c) dc \\ &\leq E_C[E[Y \mid C, A = 0]] \times E_C[\frac{P(A = 0 \mid C = c)}{P(A = 0)}] \\ &= E_C[E[Y \mid C, A = 0]]  \\ &= E_C[E[Y \mid C, A = 0]] E_C[\frac{P(A = 1 \mid C = c)}{P(A = 1)}] \\ &\leq \int E[Y \mid C = c, A = 0] \frac{P(A = 1 \mid C = c)}{P(A = 1)} f(c) dc \\ &= \int E[Y \mid C = c, A = 0] f(c \mid A = 1) dc. 
\end{align*}
The first inequality is due to $E[Y \mid C = c, A = 0]$ being non-decreasing in $c$ and $P(A = 0 \mid C = c)$ being non-increasing in $c$ by Lemma \ref{lem: proxy_u_similarity} and Lemma \ref{lem: covariance_inequality}. The second inequality is also due to Lemma \ref{lem: covariance_inequality} as well as $E[Y \mid C = c, A = 0]$ and $P(A = 1 \mid C = c)$ both being non-decreasing in $c$.
\end{proof}
These lemmas together establish that 
$E[Y \mid A = 0] \leq \int E[Y \mid A = 0, C = c] f(c \mid A = 1) dc \leq \int E[Y \mid A = 0, U = u] f(u \mid A = 1) du$, establishing the attenuation result for the effect on the treated. An analogous proof can be shown for the effect on the control, which is omitted.

\end{document}